\newtheorem{theorem}{Theorem}
\newtheorem{lemma}[theorem]{Lemma}
\newtheorem{corollary}[theorem]{Corollary}
\newenvironment{proof}{\noindent {\bf Proof }}{\qed}
\newcommand{\qed}{\penalty 1000 \hfill\penalty 1000$\Box$\par\medskip}
\newcommand{\m}[1]{B\!\left(#1\right)}
\begin{document}

\title{The complexity of solving reachability games using value and
  strategy iteration \thanks{Work supported by Center for
    Algorithmic Game Theory, funded by the Carlsberg Foundation.
The authors acknowledge support from The Danish National Research
  Foundation and The National Science Foundation of China (under the
  grant 61061130540) for the Sino-Danish Center for the Theory of
  Interactive Computation, under which part of this work was performed. A preliminary version of this paper appeared in the proceedings of CSR'11.}}

\author{Kristoffer Arnsfelt Hansen  \\       
Department of Computer Science \\
           Aarhus University
\and
        Rasmus Ibsen-Jensen \\
Department of Computer Science \\
           Aarhus University
\and
        Peter Bro Miltersen \\
Department of Computer Science \\
           Aarhus University
}

\maketitle

\begin{abstract}
Two standard algorithms for approximately solving
  two-player zero-sum concurrent reachability games are {\em value
    iteration} and {\em strategy iteration}.  
We prove upper and lower bounds of $2^{m^{\Theta(N)}}$ 
on the worst case number of
  iterations needed by both of these algorithms for providing
  non-trivial approximations to the value of a game with $N$
  non-terminal positions and $m$ actions for each player in each
  position.  
 In particular, both algorithms have doubly-exponential complexity. 
 Even when the game given as
  input has only one non-terminal position, we prove an exponential
  lower bound on the worst case number of iterations needed to provide non-trivial approximations.
\end{abstract}

\section{Introduction}
\subsection{Statement of problem and overview of results}
We consider {\em finite state, two-player, zero-sum, deterministic,
  concurrent reachability games}. For brevity, we shall henceforth
refer to these as just reachability games. The class of reachability
games is a subclass of the class of games dubbed {\em recursive games}
by Everett \cite{Everett} and was introduced to the computer science
community in a seminal paper by de Alfaro, Henzinger and Kupferman
\cite{AHK}.  A reachability game $G$ is played between two players,
Player I and Player II. The game has a finite set of {\em
  non-terminal} positions and special {\em terminal} positions GOAL
and TRAP.
\footnote{Including the TRAP position in the setup is
  actually not strictly needed, as one could replace it with any
  non-terminal position from which no escape is possible, but
  including it is quite convenient and fairly standard. In particular,
  including it makes ``a reachability game with {\em one} non-terminal  
 position'' mean what we think it should.} 
In this paper, we let $N$
denote the number of non-terminal positions and assume positions are
indexed $1,\ldots,N$ while GOAL is indexed $N+1$ and TRAP $0$.  At any
point in time during play, a {\em pebble} rests at some position. The
position holding the pebble is called the {\em current} position.  The
objective for Player I is to eventually make the current position
GOAL. If this happens, play ends and Player I wins. The objective for
Player II is to {\em forever} prevent this from happening. This may be
accomplished either by the pebble reaching TRAP from where it cannot
escape or by it moving between non-terminal positions indefinitely.
To each non-terminal position $i$ is associated a finite set of
actions $A^1_i, A^2_i$ for each of the two players. In this paper, we
assume that all these sets have the same size $m$ (if not, we may
``copy'' actions to make this so) and that $A^1_i = A^2_i =
\{1,\ldots,m\}$.  At each point in time, if the current position is
$i$, Player I and Player II simultaneously choose actions in
$\{1,\ldots,m\}$. For each position $i$ and each action pair $(a,a')
\in \{1,\ldots,m\}^2$ is associated a position $\pi(i,a,a')$. In other
words, each position holds an $m \times m$ matrix of {\em pointers} to
positions.  When the current position at time $t$ is $i$ and the
players play the action pair $(a, a')$, the new position of the pebble
at time $t+1$ is $\pi(i,a,a')$.


A {\em strategy} for a reachability game is a (possibly randomized)
procedure for selecting which action to take, given the history of the
play so far. A strategy {\em profile} is a pair of strategies, one for
each player. A {\em stationary strategy} is the special case of a
strategy where the choice only depends on the current position. Such a
strategy is given by a family of probability distributions on actions,
one distribution for each position, with the probability of an action
according to such a distribution being called a {\em behavior
  probability}.  We let $\mu_i(x, y)$ denote the probability that
Player I eventually reaches GOAL if the players play using the
strategy profile $(x,y)$ and the pebble starts in position $i$.  The
{\em lower value} of position $i$ is defined as: $\underline{v_i} =
\sup_{x \in S^1} \inf_{y \in S^2} \mu_i(x, y)$ where $S^1$ $(S^2)$ is
the set of strategies for Player I (Player II).  Similarly, the {\em
  upper value} of a position $i$ is $\overline{v_i} = \inf_{y \in S^2}
\sup_{x \in S^1} \mu_i(x, y).$ Everett \cite{Everett} showed that for
all positions $i$ in a reachability game, the lower value
$\underline{v_i}$ in fact equals the upper value $\overline v_i$, and
this number is therefore simply called the {\em value} $v_i$ of that
position. The vector $v$ is called the {\em value vector} of the
game. Furthermore, Everett showed that for any $\epsilon > 0$, there
is a stationary strategy $x^*$ of Player I so that for all positions
$i$, we have $\inf_{y \in S^2} \mu_i(x^*, y) \geq v_i - \epsilon,$
i.e. the strategy $x^*$ guarantees the value of any position within
$\epsilon$ when play starts in that position.  Such a strategy is
called {\em $\epsilon$-optimal}.  Note that $x^*$ does not depend on
$i$. It may however depend on $\epsilon > 0$ and this dependence may
be necessary, as shown by examples of Everett. In contrast, it is
known that Player II has an exact optimal strategy that is guaranteed
to achieve the value of the game, without any additive error
\cite{BAMS:Parthasarathy71,PAMS:HPRV76}.

In this paper, we consider algorithms for {\em solving} reachability
games. There are two notions of solving a reachability game relevant
for this paper:
\begin{enumerate}
\item{}{\em Quantitatively}: Given a game, compute 
$\epsilon$-approximations of the entries of its
value vector (we consider approximations, rather than
    exact computations, as the value of a reachability game may be an
    irrational number).
\item{}{\em Strategically}: Given a game, 
compute an $\epsilon$-optimal strategy for Player I.
\end{enumerate}
Once a game has been solved strategically, it is straightforward to
also solve it quantitatively (for the same $\epsilon$) by analyzing,
using linear programming, the finite state Markov decision process for
Player II resulting when freezing the computed strategy for Player I.
The converse direction is far from obvious, and it was in fact shown
by Hansen, Kouck\'{y} and Miltersen \cite{purgatory} that if standard
binary representation of behavior probabilities is used, merely {\em
  exhibiting} an $(1/4)$-optimal strategy requires worst case
exponential space in the size of the game. In contrast, a
$(1/4)$-approximation to the value vector obviously only requires
polynomial space to describe and it may be possible to compute it in
polynomial time, though it is currently not known how to do so
\cite{OtherISAAC}.

There is a large and growing literature on solving reachability games
\cite{AHK,EY,CMJ04,ChatQest,ChatSODA09,purgatory}.
In this paper, we focus on the two perhaps best-known and best-studied
algorithms,
{\em value iteration} and {\em strategy iteration}. Both were originally
derived from similar algorithms for solving Markov decision processes
\cite{Howard} and {\em discounted} stochastic games \cite{Shapley}.
We describe these algorithms next.
Value iteration is Algorithm \ref{alg-vi}. Value iteration approximately solves
reachability games quantitatively. 

\begin{center}
\begin{minipage}{0.85\linewidth}
\begin{algorithm2e}[H]
\caption{Value Iteration\label{alg-vi}}
$t:=0$ \;
$\tilde v^0 := (0,\ldots,0,1)$ \tcp{the vector $\tilde v^0$ is indexed $0,1,\ldots,N, N+1$}
\While{{\bf true}}{
 $t := t+1$ \;
$\tilde v^t_0 := 0$ \;
$\tilde v^t_{N+1} := 1$ \;
\For{$i \in \{1,2,\ldots,N\}$}{
$\tilde v^t_i := \mbox{\rm val}(A_i(\tilde v^{t-1}))$  \;
}
}
\end{algorithm2e}
\end{minipage}
\end{center}
\begin{center}
\begin{minipage}{0.85\linewidth}
\begin{algorithm2e}[H]
\caption{Strategy Iteration \label{alg-si}}
 $t:=1$ \;
 $x^1 := $ the strategy for Player I playing uniformly at each position\;
\While{{\bf true}}{
$y^t := $ an optimal {\em best reply} by Player II to $x^t$ \;
\For{$i \in \{0,1,2,\ldots,N,N+1\}$}{
$v^t_i := \mu_i(x^t,y^t)$ \;
}
$t := t+1$\;
\For{$i \in \{1,2,\ldots,N\}$}{
\eIf{${\rm val}(A_i(v^{t-1})) > v^{t-1}_i$}{
$x^t_i := \mbox{\rm maximin}(A_i(v^{t-1}))$ \;
}{
$x^t_i := x^{t-1}_i$ \;
}}}
\end{algorithm2e} 
\end{minipage}
\end{center}

In the pseudocode of Algorithm \ref{alg-vi}, the matrix $A_i(\tilde
v^{t-1})$ denotes the result of replacing each pointer to a position
$j$ in the $m \times m$ matrix of pointers at position $i$ with the
real number $\tilde v^{t-1}_j$. That is, $A_i(\tilde v^{t-1})$ is a
matrix of $m \times m$ real numbers. Also, val$(A_i(\tilde v^{t-1}))$
denotes the value of the {\em matrix game} with matrix $A_i(\tilde
v^{t-1})$ and the row player being the maximizer. This value may be
found using linear programming.  Value iteration works by iteratively
updating a {\em valuation} of the positions, i.e., the numbers $\tilde
v^t_i$. Clearly, when implementing the algorithm, valuations $\tilde
v^t_i$ only have to be kept for one iteration of the while loop after
the iteration in which they are computed and the algorithm thus only
needs to store $O(N)$ real numbers.\footnote{
In this paper, we assume the real
  number model of computation and ignore the (severe) technical issues
  arising when implementing the algorithm using finite-precision
  arithmetic.}  As stated, the algorithm is non-terminating, but has
the property that as $t$ approaches infinity, the valuations $\tilde
v^t_i$ approach the correct values $v_i$ from below. We present an
easy (though not self-contained) proof of this well-known fact in
section \ref{sec-pre} below, and also explain the intuition behind the
truth of this statement. However, until the present paper, there
  has been no published information on the number of iterations needed
  for the approximation to be an $\epsilon$-approximation to the
  correct value for the general case of concurrent reachability
  games, though Condon \cite{Con} observed that for the case of {\em
  turn-based} games (or ``simple stochastic games''), the number of
iterations has to be at least exponential in $N$ in order to achieve
an $\epsilon$-approximation. Clearly, the concurrent case is at 
least as bad. In
fact, this paper will show that the concurrent case is in fact much worse.

Strategy iteration is Algorithm \ref{alg-si}. It approximately solves
reachability games quantitatively as well as strategically.  In the
pseudocode of Algorithm \ref{alg-si}, the line ``$y^t := $ an optimal
{\em best reply} to $x^t$'' should be interpreted as follows: When
Player I's strategy has been ``frozen'' to $x^t$, the resulting game
is a one-player game for Player II, also known as an {\em absorbing
  Markov decision process}.  For such a process, an optimal stationary
strategy $y^t$ that is pure is known to exist, and can be found in
polynomial time using linear programming \cite{Howard}.  The
expression $\mbox{\rm maximin}(A_i(v^{t-1}))$ denotes a maximin mixed
strategy (an ``optimal strategy'') for the maximizing row player in
the matrix game $A_i(v^{t-1})$.  This optimal strategy may again be
found using linear programming.  The strategy iteration algorithm was
originally described for {\em one-player games} by Howard
\cite{Howard}, with Player I being the single player -- in
that case, in the pseudocode, the line ``$y^t := $ an optimal {\em
  best reply} to $x^t$'' is simply omitted. Subsequently, a variant of
the pseudocode of Algorithm \ref{alg-si} was shown by Hoffman and Karp
\cite{HK} to be a correct approximation algorithm for the class of
{\em recurrent undiscounted stochastic games} and by Rao,
Chandrasekaran and Nair \cite{RCN} to be a correct algorithm for the
class of {\em discounted stochastic games}. Finally, Chatterjee, de
Alfaro and Henzinger \cite{ChatQest} showed the pseudocode of
Algorithm \ref{alg-si} to be a correct approximation algorithm for the
class of reachability games.
As is the case for value
iteration, the strategy iteration algorithm is non-terminating, but
has the property that as $t$ approaches infinity, the valuations
$v^t_i$ approach the correct values $v_i$ from below. Chatterjee {\em
  et al.} \cite[Lemma 8]{ChatQest} prove this by relating the
algorithm to the value iteration algorithm. In particular, they prove:
\begin{equation}\label{ineq-val-strat}
\tilde v_i^t \leq v_i^t \leq v_i.
\end{equation}
That is, strategy iteration needs at most as many iterations of the
while loop as value iteration to achieve a particular degree of
approximation to the correct values $v_i$. Also, the strategies $x^t$
{\em guarantee} the valuations $v_i^t$ for Player I, so whenever these
valuations are $\epsilon$-close to the values, the corresponding $x^t$
is an $\epsilon$-optimal strategy.  However, until the present
  paper, there has been no published information on the number of
  iterations needed for the approximation to be an $\epsilon$-optimal
  solution, though a recent breakthrough result of Friedman
\cite{Fried} proved
that for the case of {\em
  turn-based} games, the number of
iterations is at least exponential in $N$ in the worst case. Clearly,
the concurrent case is at least as bad. In fact, this paper will show
that the concurrent case is much worse!

As our main result, we exhibit a family of reachability
games with $N$ positions and $m$ actions for each player in each
position, such that all non-terminal positions have value one and
such that value iteration as well as strategy iteration need at least a
{\em doubly exponential} $2^{m^{\Omega(N)}}$ number of iterations to
obtain valuations larger than any fixed constant (say $0.01$). By
inequality (\ref{ineq-val-strat}), it is enough to consider the
strategy iteration algorithm to establish this. However, our proof is
much easier and cleaner for the value iteration algorithm, the exact
bounds are somewhat better, and our much more technical proof for the
strategy iteration case is in fact based upon it. So, we shall present
separate proofs for the two cases.

Our hard instances $P(N,m)$ for both algorithms are generalizations of
the ``Purgatory'' games defined by Hansen, Miltersen and Kouck\'{y}
\cite{purgatory} (these occur as special cases by setting $m=2$).
Following the conventions of that paper, we describe these games as
being games between {\em Dante} (Player I) and {\em Lucifer} (Player
II). The game $P(N,m)$ can be described succinctly as follows:
{\em 
  Lucifer repeatedly selects and hides a number between 1 and
  $m$. Each time Lucifer hides such a number, Dante must try to guess
  which number it is. After the guess, the hidden number is revealed.
  If Dante ever guesses a number which is strictly higher than the one
  Lucifer is hiding, Dante loses the game. If Dante ever guesses
  correctly $N$ times in a row, the game ends with Dante being the
  winner.  If neither of these two events ever happen and the play
  thus continues forever, Dante loses.
}
It is easy to see that $P(N,m)$ can be described as a deterministic
concurrent reachability game with $N$ non-terminal positions and
$m$ actions for each player in each position. Also, by applying
a polynomial-time algorithm by de Alfaro {\em et al.} \cite{AHK} for determining
which positions in a reachability game have value 1, we find that
all positions except TRAP have value 1 in $P(N,m)$. That is, Dante
can win this game with arbitrarily high probability.

We note that these hard instances are very natural and 
easy to describe as games that one might even conceivably have 
a bit of fun playing (the reader is invited to try playing
$P(3,2)$ or $P(1,5)$ with an uninitiated party)!
In this respect, they are quite different from the recent 
extremely ingenious turn-based games due to Friedman \cite{Fried} 
where strategy iteration exhibits exponential behavior.

Using recent improved upper bounds on the \emph{patience} of
$\epsilon$-optimal strategies for Everett's recursive games, we
provide matching $2^{m^{O(N)}}$ upper bounds on the
number of iterations sufficient for getting adequate approximate
values, by each of the algorithms. In particular, both algorithms are
also of {\em at most} doubly-exponential complexity. 

\begin{table}[ht]
\center
\label{table}
\begin{tabular}{llllllllll}
\hline\noalign{\smallskip}
\# Iterations & $10^0$ & $10^1$ & $10^2$ & $10^3$ & $10^4$ & $10^5$ &
$10^6$ & $10^7$ & $10^8$ \\
\noalign{\smallskip}\hline\noalign{\smallskip}
Valuation & 0.013 & 0.035 & 0.069 & 0.102 & 0.134  & 0.165 & 0.194  & 0.223  & 0.248 \\
\noalign{\smallskip}\hline
\end{tabular}
\caption{Running Strategy Iteration on $P(7,2)$.}
\end{table}

That the doubly-exponential complexity is a real phenomenon is
illustrated in Table \ref{table} which tabulates the valuations
computed by strategy iteration for the initial position of $P(7,2)$, i.e.,
``Dante's Purgatory'' \cite{purgatory}, a 7-position game of
value 1. The algorithm was implemented using double precision floating point
arithmetic and was allowed to run for one hundred
million iterations at which point the arithmetic precision was
inadequate for representing the computed strategies (note that the main result
of Hansen, Miltersen and Kouck\'{y} \cite{purgatory}
implies that roughly 64 {\em decimal}
digits of precision is needed to describe a strategy achieving 
a valuation above 0.9).

Interestingly, when introduced as an algorithm for solving concurrent
reachability games \cite{ChatQest}, strategy iteration was proposed as
a practical alternative to generic algorithms having an exponential
worst case complexity. More precisely, one obtains a generic algorithm
for solving reachability games quantitatively by reducing the problem
to the decision problem for the existential fragment of the first
order theory of the real numbers \cite{EY}. This yields an
exponential time (in fact a {\bf PSPACE}) algorithm.
Our results show that this generic algorithm is in fact astronomically
more practical than strategy iteration on very simple and natural
instances. Still, it is not practical in any real sense of this term,
even given state-of-the-art implementations of the best known decision
procedures for the theory of the reals. Finding a practical algorithm
remains a very interesting open problem.
\subsection{Overview of proof techniques}
Our proof of the lower bound for the case of value 
iteration is very intuitive. It is based on combining the following facts:
\begin{enumerate}
\item{}The valuations $\tilde v^t_i$ obtained in iteration $t$ of value iteration is in fact the values of a {\em time bounded} version of the reachability game, where Player I loses if he has not reached GOAL at time $t$.
\item{}While the value of the game $P(N,m)$ is 1, the value of its
  time bounded version is very close to 0 for all small values of $t$.
\end{enumerate}
The second fact was established by Hansen {\it et
  al.} \cite{purgatory} for the case $m=2$ by relating the so-called
{\em patience} of reachability games to the values of their time
bounded version, without the connection to the value iteration
algorithm being made explicit, by giving bounds on the patience of the
games $P(N,2)$. The present paper provides a different and arguably
simpler proof of the lower bound on the value of the time bounded game
that gives bounds also for other values of $m$ than 2. It is based on
exhibiting a fixed strategy for Lucifer that prevents Dante from
winning fast.

The lower bound for strategy iteration is much more technical.
We remark that the analysis of value iteration is
used twice and in two different ways in the proof.
It proceeds roughly as follows: The analysis of value iteration
yields that when value iteration is applied to $P(1,m)$,
exponentially many iterations (in $m$) are needed to yield 
a close approximation of the value.
We can also show that when strategy iteration is applied to $P(1,m)$,
exactly the same sequence of valuations is computed as when value
iteration is applied to the same game.
From these two facts, we can derive an upper bound on the
patience of the strategies computed by strategy
iteration on $P(1,m)$.
Next, a quite involved argument shows that when
applying strategy iteration to $P(N,m)$, the sequence of strategies computed
for one of the positions (the initial one) is exactly the
same as the one computed when strategy iteration is applied
to $P(1,m)$. We also show that the smallest behavior probability in
the computed strategy for $P(N,m)$ occurs in the initial position.
In particular, the patiences of the sequence of strategies computed for
$P(N,m)$ is the same as the patiences of the sequence of strategies
computed for $P(1,m)$.
Finally, our analysis of value iteration for $P(N,m)$ and the relationship
between patience and value iteration allow us to conclude that a
strategy with low patience for $P(N,m)$ cannot be near-optimal, 
yielding the desired doubly-exponential lower bound. 

\section{Theorems and Proofs}
\subsection{The connection between patience, the value of time bounded
  games, and the complexity of value iteration}\label{sec-pre}
The key to understanding value iteration is the following folklore lemma.
Given a concurrent reachability game $G$, we define $G_T$ to be the {\em finite}
extensive form game with the same rules as $G$, except that 
Player 1 loses if he has not reached GOAL after $T$ moves of
the pebble. The positions of $G_T$ are denoted by $(i,t)$, where
$i$ is a position of $G$ and $t$ is an integer denoting the number
of time steps left until Dante's time is out. 
\begin{lemma}\label{lem-valit}
The valuation $\tilde v^t_i$ computed by the value iteration algorithm when
applied to a game $G$ is the exact value of position
$(i,t)$ in the game $G_t$.
\end{lemma}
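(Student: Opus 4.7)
The plan is to prove the lemma by induction on the time bound $t$, using backward induction on the finite-horizon game $G_t$ together with the minimax theorem at each stage.

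For the base case $t = 0$, the game $G_0$ permits no moves at all, so Player I wins from position $(i,0)$ if and only if $i$ is already GOAL. The value of $(i,0)$ is therefore $1$ when $i = N+1$ and $0$ otherwise, matching the initial vector $\tilde v^0 = (0,\ldots,0,1)$ used by the algorithm. The absorbing terminal positions are handled uniformly at all iterations: the algorithm explicitly keeps $\tilde v^t_0 = 0$ and $\tilde v^t_{N+1} = 1$, which are evidently the values of $(0,t)$ and $(N+1,t)$ in $G_t$.

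For the inductive step, assume the claim holds for $t-1$. Fix a non-terminal position $i$. In $G_t$, play at $(i,t)$ consists of a single simultaneous move in which the players choose actions $(a,a') \in \{1,\ldots,m\}^2$, after which the pebble moves to $\pi(i,a,a')$ and play continues from position $(\pi(i,a,a'), t-1)$ in $G_{t-1}$. By the inductive hypothesis this continuation has value $\tilde v^{t-1}_{\pi(i,a,a')}$. Since $G_t$ is a finite-horizon zero-sum game with perfect recall, Kuhn's theorem together with backward induction lets us collapse the continuation into its scalar value, so the value of $(i,t)$ is the value of the one-shot matrix game whose $(a,a')$-entry is $\tilde v^{t-1}_{\pi(i,a,a')}$. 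By definition this matrix is exactly $A_i(\tilde v^{t-1})$, and by the minimax theorem for finite zero-sum matrix games its value is $\mathrm{val}(A_i(\tilde v^{t-1})) = \tilde v^t_i$, closing the induction.

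The only real subtlety, and hence the main (mild) obstacle, is to justify replacing the multi-stage subgame at $(\pi(i,a,a'), t-1)$ by its scalar value inside a matrix game at stage $(i,t)$. This requires invoking that in a finite-horizon zero-sum game with simultaneous moves and perfect recall, behavior strategies suffice and optimal play can be computed by stage-by-stage backward induction. This is a standard fact for finite extensive-form games and is precisely the reason the lemma is folklore; once it is taken for granted, the identification of the stage game with $A_i(\tilde v^{t-1})$ and of the value iteration update with the minimax value is immediate.
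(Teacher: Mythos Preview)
Your proof is correct and follows exactly the approach the paper indicates: the paper merely states that the proof is ``an easy induction in $t$ (`Backward induction')'' without giving details, and your argument fleshes out precisely that induction. There is nothing to add.
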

The proof is an easy induction in $t$ (``Backward induction'').
A very general result by Mertens and Neyman \cite{MertensNeyman} establishes
that for a much more general class of games (undiscounted stochastic
games), the value of the
time bounded version converges to the value of the infinite version
as the time bound approaches infinity.
Combining this with Lemma \ref{lem-valit} immediately yields the 
correctness of the value iteration algorithm.

The {\em patience} \cite{Everett} 
of a stationary strategy for a concurrent reachability
game is $1/p$, where $p$ is the smallest non-zero behavior probability
employed by the strategy in any position. The following lemma relates
the patience of near-optimal strategies of a reachability game
to the difference between the values of 
the time bounded and the infinite game and hence to the convergence rate of
value iteration. 
\begin{lemma}\label{lem-patience2time}
  Let $G$ be a reachability game with $N$ non-terminal positions and
  with an $\epsilon$-optimal strategy of patience at most $l$, for
  some $l \geq 1, \epsilon > 0$.  Let $T = k N l^N$ for some $k \geq
  1$, and $u$ be any position of $G$. Then, the value of position
  $(u,T)$ of $G_T$ differs from the value of the position $u$ of $G$
  by at most $\epsilon + e^{-k}$.
\end{lemma}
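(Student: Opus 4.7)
The plan is to let Player I play $x^*$ throughout $G_T$. Since $v(u,T) \le v_u$ is immediate, the lemma reduces to showing that for every Player II strategy $y$ in $G_T$, we have $\Pr_{(x^*, y)}[\text{GOAL in } T \text{ steps} \mid X_0 = u] \ge v_u - \epsilon - e^{-k}$. I would carry out the analysis against Player II's pure stationary best response $y^{\dagger}$ to $x^*$ in the infinite game $G$; such a $y^{\dagger}$ exists by standard absorbing-MDP theory, since with Player I's strategy fixed, Player II's induced side is an MDP with a reachability objective.

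In the time-homogeneous Markov chain $(x^*, y^{\dagger})$, every positive one-step transition probability is at least $1/l$, because $x^*$ has patience $l$ and $y^{\dagger}$ is pure. Let $f_i := \Pr_{(x^*, y^{\dagger})}[\text{reach GOAL} \mid X_0 = i]$; by $\epsilon$-optimality, $f_u \ge v_u - \epsilon$. For any non-terminal $j$ with $f_j > 0$, GOAL is reachable from $j$ in the chain's transition graph, and since there are only $N$ non-terminal positions, it is reachable via a simple path of length at most $N$, which is traversed with probability at least $l^{-N}$. Setting $Q^t_j := \Pr_{(x^*, y^{\dagger})}[\tau > t,\ X_\tau = N{+}1 \mid X_0 = j]$, this yields the single-block contraction $Q^N_j \le f_j - l^{-N} \le (1 - l^{-N}) f_j$ (using $f_j \le 1$, and trivially when $f_j = 0$). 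The Markov property then gives $Q^{t+N}_j \le (1 - l^{-N})\, Q^t_j$, and iterating $kl^N$ times with $T = kN l^N$ produces $Q^T_u \le (1 - l^{-N})^{k l^N} f_u \le e^{-k}$. Consequently, under $(x^*, y^{\dagger})$, Player I reaches GOAL within $T$ steps with probability at least $f_u - Q^T_u \ge v_u - \epsilon - e^{-k}$.

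The remaining task is to upgrade this from $y^{\dagger}$ to every Player II strategy $y$ in $G_T$. Given such a $y$, extend it to an infinite-game strategy $y'$ by letting $y' = y$ on $[0,T]$ and $y' = y^{\dagger}$ thereafter; $\epsilon$-optimality then gives $\Pr_{(x^*, y')}[\text{GOAL}] \ge v_u - \epsilon$, and this probability decomposes as $\Pr_{(x^*, y)}[\text{GOAL in } T] + \mathbb{E}_{y}[f_{X_T}\,\mathbf{1}[\tau > T]]$. The main technical obstacle is to bound the second term by $e^{-k}$ uniformly in $y$: because $(x^*, y)$ is not time-homogeneous before time $T$ one cannot directly invoke the Markov estimate above, but the same shortest-path idea applied inside any $N$-step window still delivers the one-block contraction $\mathbb{E}[f_{X_{t+N}}\,\mathbf{1}[\tau > t+N] \mid X_t = i, \tau > t] \le (1 - l^{-N}) f_i$ at every $t$, and iterating this yields the required bound. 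Combining the two estimates gives $\Pr_{(x^*, y)}[\text{GOAL in } T] \ge v_u - \epsilon - e^{-k}$ for every $y$, completing the proof.
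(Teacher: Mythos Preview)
Your detour through the stationary best reply $y^{\dagger}$ and the $Q^t$-contraction is correct, but the ``upgrade'' step contains a real gap. The claimed one-block inequality
\[
\mathbb{E}\bigl[f_{X_{t+N}}\,\mathbf{1}[\tau>t+N]\;\big|\;X_t=i,\ \tau>t\bigr]\le (1-l^{-N})\,f_i
\]
does \emph{not} follow from the shortest-path argument when $y$ is arbitrary. The shortest-path bound only gives $\Pr[\text{GOAL in }N\text{ steps}\mid X_t=i]\ge l^{-N}$ for \emph{alive} $i$; it says nothing about $\mathbb{E}[f_{X_{t+N}}]$, because under a $y$ different from $y^{\dagger}$ the pebble can move from $i$ to states $j$ with $f_j$ much larger than $f_i$. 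The failure is most blatant at a \emph{dead} position $i$ (i.e.\ $f_i=0$): there Lucifer merely \emph{can} trap the pebble, but an arbitrary $y$ need not do so and may send the pebble to an alive $j$ with $f_j>0$, making the left side strictly positive while the right side is $0$. Since your iteration must be applied at whatever state $X_{sN}$ happens to be --- and from an alive start the pebble can certainly reach dead states under $(x^*,y)$ --- the chain of inequalities breaks.

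The paper's proof avoids this by never introducing $y^{\dagger}$ or $f$. It fixes directly a best reply $y$ to $x^*$ in $G_T$ and uses the alive/dead dichotomy: from any alive state, regardless of $y$, GOAL is hit within $N$ steps with probability at least $l^{-N}$; and because $y$ is a \emph{best} reply in $G_T$, from any dead state $y$ must already achieve probability~$0$ of reaching GOAL, hence once dead the pebble never reaches GOAL. The first fact gives $\Pr[\text{alive throughout }[0,T]]\le(1-l^{-N})^{T/N}\le e^{-k}$; combining with the second, if $\Pr[\text{GOAL by }T]<v_u-\epsilon-e^{-k}$ then a dead state is reached with probability exceeding $1-(v_u-\epsilon)$, and Lucifer (playing $y$ until dead, then the trapping strategy) would beat $x^*$ in the infinite game, contradicting $\epsilon$-optimality. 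Your framework can be repaired along exactly these lines: restrict to a best reply $y$ in $G_T$, observe that such a $y$ keeps dead states dead (so $f_{X_T}\mathbf{1}[\tau>T]\le\mathbf{1}[\text{alive throughout}]$), and bound the latter by $e^{-k}$ --- but this is no longer the $f$-contraction you wrote down.
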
 
\begin{proof}
We want to show that the value of $(u,T)$ in $G_T$
is at least $v_u - \epsilon - e^{-k}$, where $v_u$ is the value of position
$u$ in $G$. We can assume that $v_u>\epsilon$, because otherwise we are done.
Fix an $\epsilon$-optimal stationary strategy $x$ for Dante in 
$G$ of patience at most $l$.
Consider this as a strategy of $G_T$ and consider play starting in $u$. 
We shall show that $x$ guarantees Dante to win $G_T$ with
probability at least $v_u - \epsilon - e^{-k}$, thus proving the
statement.
Consider a best reply $y$ by Lucifer to $x$ in $G_T$.
Note that $y$ does not necessarily correspond to a stationary strategy
in $G$. The strategy can still be played by Lucifer in $G$, by playing 
by it for the first $T$ time steps and playing arbitrarily afterwards. 

Call a position $v$ of $G$ {\em alive} if there are paths from $v$ to
GOAL in {\em all} directed graphs obtained from $G$ in the following
way: The nodes of the graphs are the positions of $G$. We then select
for each position an arbitrary column for the corresponding matrix,
and let the edges going out from this node correspond to the pointers
of the chosen column and rows where Dante assigns positive
probability. That is, intuitively, a position $v$ is alive, if and
only if there is no {\em absolutely sure} way for Lucifer for
preventing Dante from reaching GOAL when play starts in $v$.
Positions that are not alive are called {\em dead}. Note that if a
position $v$ is dead, the strategy $y$, being a best reply of Lucifer,
will pick actions so that the probability of play reaching GOAL,
conditioned on play having reached $v$, is 0.  On the other hand, if
the current position $v$ is alive, the conditional probability that
play reaches GOAL within the next $N$ steps is at least $(1/l)^N$.
That is, looking at the entire play, the probability that play has
{\em not} reached either GOAL or a dead state after $T$ steps is at
most $(1-l^{-N})^{T/N} = (1-l^{-N})^{k l^N} \leq e^{-k}$.  Suppose now
that GOAL is reached in $T$ steps with probability strictly less than
$v_u - \epsilon - e^{-k}$ when play starts in $u$. This means that a
dead position is reached with probability strictly greater than
$1-(v_u - \epsilon - e^{-k}) - e^{-k}$, i.e., strictly greater than
$1-(v_u-\epsilon)$. But this means that if Lucifer plays $y$ as a reply
to $x$ in the {\em infinite} game $G$ he will in fact succeed in
getting the pebble to reach a dead position and hence prevent Dante
from {\em ever} reaching GOAL, with probability strictly greater than
$1-(v_u-\epsilon)$.  This contradicts $x$ being $\epsilon$-optimal for
Dante in $G$.  Thus, we conclude that GOAL is in fact reached in $T$
steps with probability at least $v_u - \epsilon - e^{-k}$ when play
starts in $u$ with $x$ and $y$ being played against each other in
$G_T$, as desired.
\end{proof}
The connection between the convergence of value iteration and
the time bounded version of the game allows us to reformulate the
lemma in the following very useful way.
\begin{lemma}\label{lem-patience2valit}
Let $G$ be a reachability game with an $\epsilon$-optimal
strategy of patience at most $l$, for some $\epsilon > 0$.
Then, $T = k N l^N$ rounds of value iteration is sufficient
to approximate the values of all positions of the game with additive error
at most $\epsilon + e^{-k}$.
\end{lemma}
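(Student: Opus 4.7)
The plan is to obtain this lemma as an essentially immediate corollary of the two preceding lemmas, viewing it as a reformulation that trades ``value of the time-bounded game'' for ``valuation produced by value iteration.'' The two ingredients do all the work: Lemma~\ref{lem-valit} identifies the output of value iteration with an exact game-theoretic quantity, and Lemma~\ref{lem-patience2time} bounds how far that quantity can be from the true value.

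Concretely, I would argue as follows. Fix any position $u$ of $G$, and set $T = k N l^N$. By Lemma~\ref{lem-valit}, the valuation $\tilde v^T_u$ produced after $T$ iterations of value iteration is exactly the value of position $(u,T)$ in the finite extensive form game $G_T$. By Lemma~\ref{lem-patience2time}, applied with the same parameters $k$, $N$, $l$, and $\epsilon$, this value differs from the value $v_u$ of position $u$ in $G$ by at most $\epsilon + e^{-k}$. Combining these two statements gives $|\tilde v^T_u - v_u| \leq \epsilon + e^{-k}$, which is exactly the claimed approximation guarantee. Since $u$ was arbitrary, the bound holds for all positions of $G$ simultaneously.

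There is essentially no obstacle here: the lemma is purely a packaging of previously established content, expressing the conclusion of Lemma~\ref{lem-patience2time} in the language of value iteration via the identification from Lemma~\ref{lem-valit}. The only minor point worth remarking is that one-sidedness comes for free: Lemma~\ref{lem-valit} ensures $\tilde v^T_u$ is the value of $G_T$, and any strategy for Dante in $G_T$ can be played in $G$, so $\tilde v^T_u \leq v_u$, meaning the approximation is always from below; thus the absolute error bound and the one-sided error bound coincide here. No further calculation is required.
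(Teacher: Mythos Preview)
Your proposal is correct and matches the paper's approach exactly: the paper does not even give a separate proof of this lemma, but simply introduces it as a ``reformulation'' of Lemma~\ref{lem-patience2time} via the identification of value-iteration outputs with time-bounded game values from Lemma~\ref{lem-valit}. Your write-up spells out this reformulation in full detail, which is more than the paper itself provides.
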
 
We can use this lemma to prove our upper bound on the number
of iterations of value iteration (and hence also strategy iteration).
The following lemma is from Hansen {\em et al.} \cite{HKLMT10}.
\begin{lemma}[Hansen, Kouck\'{y}, Lauritzen, Miltersen and Tsigaridas]
  \label{lem-HKLMT}
  Let $\epsilon > 0$ be arbitrary. Any concurrent reachability game
  with $N$ positions and at most $m \geq 2$ actions in each position has an
  $\epsilon$-optimal stationary strategy of patience at most
  $(1/\epsilon)^{m^{O(N)}}$.
\end{lemma}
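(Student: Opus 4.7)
The plan is to formulate the existence of an $\epsilon$-optimal stationary strategy with bounded patience as a semi-algebraic condition and then to apply quantitative root separation bounds from real algebraic geometry.

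First, I would describe the value vector $v \in [0,1]^N$ as the solution of a polynomial system of small total degree. By the Shapley--Snow theorem, the value of an $m \times m$ matrix game is a ratio of two $m \times m$ sub-determinants, hence a rational function of its entries whose numerator and denominator each have degree at most $m$. Applied to the Bellman equations $v_i = \mathrm{val}(A_i(v))$ of the reachability game, whose entries are coordinates of $v$, this yields (after clearing denominators and picking the correct sub-support) $N$ polynomial equations in $v_1,\ldots,v_N$, each of degree at most $m$. Similarly, a maximin strategy in each matrix game can be written as a rational function of degree $O(m)$ in the relevant $v_j$'s.

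Second, I would encode $\epsilon$-optimality as a polynomial condition by introducing an auxiliary vector $w$ of ``claimed values'' together with Dante's behavior probabilities $x_i(a)$: the conditions are $\sum_a x_i(a) = 1$, $x_i(a) \geq 0$, $\sum_a x_i(a)\, w_{\pi(i,a,b)} \geq w_i$ for every position $i$ and every pure reply $b$ of Lucifer, and $w_i \geq v_i - \epsilon$ with $v$ satisfying the Bellman system above. Since we are only interested in strategies of maximum patience, we add a variable $p > 0$ and the constraints $x_i(a) \geq p$ or $x_i(a) = 0$ for each $(i,a)$ and ask for the largest $p$ for which the system is feasible. Applying a quantitative lower bound on the extremal coordinate of a point in a nonempty semi-algebraic set (for instance the Basu--Pollack--Roy / Canny gap theorem, or a direct Mignotte-type bound on the univariate polynomial obtained after elimination) to a system of $O(Nm)$ constraints of degree $O(m)$ with coefficient bit-size $O(\log(1/\epsilon))$ yields the desired lower bound of $\epsilon^{m^{O(N)}}$ on $p$, i.e.\ patience at most $(1/\epsilon)^{m^{O(N)}}$.

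The main obstacle will be arranging the elimination so that the exponent is $m^{O(N)}$ rather than the weaker $m^{O(Nm)}$ that blind quantifier elimination would produce. The key structural observation is that the different positions interact only through the $N$-dimensional value vector $v$ and not directly through each other's strategy probabilities. One should therefore first eliminate the $x_i(a)$ locally at each position using the Shapley--Snow formula to reduce to a system in the $N$ variables $v_1,\ldots,v_N$ of degree $m$, obtaining a Bezout bound of $m^N$ for the number of solutions and a corresponding root-separation bound; only afterwards should one translate back to an explicit lower bound on the behavior probabilities via the rational expressions for the optimal $x_i(a)$'s. This structured, two-stage elimination is what makes the $m^{O(N)}$ exponent achievable.
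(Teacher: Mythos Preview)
The paper does not prove this lemma at all: it is quoted verbatim as a result from Hansen, Kouck\'{y}, Lauritzen, Miltersen and Tsigaridas~\cite{HKLMT10} and used as a black box to derive the upper bound on the number of iterations. So there is no ``paper's own proof'' to compare your proposal against.

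That said, your sketch is along the right lines for how such a bound is actually obtained in~\cite{HKLMT10}, namely by encoding the problem semi-algebraically and invoking quantitative real algebraic geometry. A few points in your outline would need tightening before it becomes a proof. First, the value vector of a concurrent reachability game is the \emph{least} fixed point of the Bellman operator, not an arbitrary one, so simply writing $v_i = \mathrm{val}(A_i(v))$ as a polynomial system does not single out the correct $v$; one has to add further first-order conditions (or work instead with Everett's critical vector characterization) to isolate it. Second, the Shapley--Snow formula gives the value as a ratio of sub-determinants only after the correct support has been guessed, and the choice of support is itself a combinatorial branching that has to be folded into the first-order formula; you allude to this but do not account for its effect on the degree or the number of cases. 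Third, your disjunctive constraint ``$x_i(a) \geq p$ or $x_i(a)=0$'' and the subsequent optimization over $p$ are not directly amenable to a single gap bound; one typically fixes the support pattern, obtains a bound for each pattern, and then takes the worst case. Your observation that the positions interact only through the $N$-dimensional vector $v$, and that one should therefore eliminate the strategy variables locally before doing the global elimination, is exactly the structural idea needed to get the exponent down to $m^{O(N)}$; that part of the plan is sound.
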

This lemma is an asymptotic improvement of Theorem 4 of Hansen {\em et
  al.} \cite{purgatory}, that gave an upper bound of
$(1/\epsilon)^{2^{30 M}}$, for a \emph{total} number of $M$ actions,
when $M \geq 10$ and $0 < \epsilon < \frac{1}{2}$. This result does however
have the advantage of an \emph{explicit} constant in the exponent,
which the bound of Lemma \ref{lem-HKLMT} lacks.

Combining Lemma \ref{lem-patience2valit}, Lemma \ref{lem-HKLMT}, and
also applying inequality (\ref{ineq-val-strat}), we get the following
upper bound:
\begin{theorem}
  Let $\epsilon > 0$ be arbitrary.  When applying value iteration or
  strategy iteration to a concurrent reachability game with $N$
  non-terminal positions and $m \geq 2$ choices for each player in
  each position, after at most $(1/\epsilon)^{m^{O(N)}}$ iterations,
  an $\epsilon$-approximation to the value has been obtained.
\end{theorem}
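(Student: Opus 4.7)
The plan is to combine Lemma~\ref{lem-patience2valit} and Lemma~\ref{lem-HKLMT} with a careful choice of parameters, and then invoke inequality~(\ref{ineq-val-strat}) to lift the resulting bound from value iteration to strategy iteration. Everything the theorem needs is already stated; the proof is essentially parameter bookkeeping.

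First, to absorb the $e^{-k}$ error term of Lemma~\ref{lem-patience2valit}, I would apply Lemma~\ref{lem-HKLMT} with target error $\epsilon/2$ rather than $\epsilon$. This yields an $(\epsilon/2)$-optimal stationary strategy of patience $l \leq (2/\epsilon)^{m^{O(N)}}$. Then I would choose $k := \ln(2/\epsilon)$, so that $e^{-k} = \epsilon/2$, and feed this $l$ and $k$ into Lemma~\ref{lem-patience2valit}. The conclusion is that after $T = k N l^N$ rounds of value iteration, every position is approximated with additive error at most $(\epsilon/2) + e^{-k} = \epsilon$.

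It remains to check that $T$ fits into the claimed bound $(1/\epsilon)^{m^{O(N)}}$. Observe that
\[
l^N \leq \left((2/\epsilon)^{m^{O(N)}}\right)^N = (2/\epsilon)^{N\cdot m^{O(N)}} = (1/\epsilon)^{m^{O(N)}},
\]
since the extra factor $N$ in the exponent can be absorbed into $m^{O(N)}$ (using $N \leq m^N$ for $m \geq 2$, and that $\log(2/\epsilon)/\log(1/\epsilon)$ is bounded once $\epsilon$ is, say, below $1/4$; the remaining small-$\epsilon$ range is handled trivially). The prefactor $kN = N\ln(2/\epsilon)$ is polynomial in $N$ and logarithmic in $1/\epsilon$, so it is swallowed by $l^N$, giving $T \leq (1/\epsilon)^{m^{O(N)}}$ as required.

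For the strategy iteration case, nothing further needs to be proved: inequality~(\ref{ineq-val-strat}) says $\tilde v_i^t \leq v_i^t \leq v_i$, so as soon as value iteration's valuations $\tilde v_i^t$ lie within $\epsilon$ of the true values (from below), so do the strategy iteration valuations $v_i^t$. I do not expect a genuine obstacle in carrying this out; the only subtlety is the parameter balancing and confirming that the factors of $N$ and $\log(1/\epsilon)$ really do get absorbed into the $m^{O(N)}$ exponent.
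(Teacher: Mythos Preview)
Your proposal is correct and matches the paper's own argument, which simply states that the theorem follows by combining Lemma~\ref{lem-patience2valit}, Lemma~\ref{lem-HKLMT}, and inequality~(\ref{ineq-val-strat}); you have merely spelled out the parameter bookkeeping that the paper leaves implicit. (One small slip: where you write ``the remaining small-$\epsilon$ range'' you mean the large-$\epsilon$ range $\epsilon \geq 1/4$.)
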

Also, Lemma \ref{lem-patience2valit}
will be very useful for us below when applied in the contrapositive. 
Specifically, below, we
will directly analyze and compare the value of $P(N,m)$ with
the value of its time bounded version, and use this to conclude
that the value iteration algorithm does not converge quickly when
applied to this game.
The lemma then implies that the patience of any $\epsilon$-optimal
strategy is large. When we later consider the strategy iteration algorithm
applied to the same game, we will show that the strategy
computed after any sub-astronomical number of iterations has 
too low patience to be $\epsilon$-optimal.

\subsection{The value of time bounded Generalized Purgatory and the
complexity of value iteration}\label{sec-val}
In this section we give an upper bound on the value of a time bounded
version of the Generalized Purgatory game $P(N,m)$. 
As explained in Section \ref{sec-pre}, 
this upper
bound immediately implies a lower bound on the number of iterations
needed by value iteration to approximate the value of the
original game.

We let $P_T(N,m)$ be the time bounded version of
$P(N,m)$ as defined in Section \ref{sec-pre}, i.e.
$P_T(N,m)$ is syntactic sugar for $(P(N,m))_T$.
Also, we need to fix an indexing of the positions of $P(N,m)$.
We define position $i$ for $i=1,\ldots,N$ to be the position
where Dante already guessed correctly $i-1$ times in a row and
still needs to guess correctly $N-i+1$ times in a row to win the game.
 
First we give a rather precise analysis of the one-position case.
Besides being interesting in its own right (to establish that
value iteration is exponential even for this case), this will 
also be useful later when we analyze strategy iteration.
\begin{theorem}\label{thm-valitonepos}
Let $m \geq 2$ and $T \geq 1$. The value of position $(1,T)$
of $P_T(1,m)$ is less than
\[1-(1-\frac{1}{m})(\frac{1}{mT})^{1/(m-1)}.\]
\end{theorem}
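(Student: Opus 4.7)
The plan is to exhibit a specific stationary strategy for Lucifer and then, by Lemma~\ref{lem-valit}, identify the value of position $(1,T)$ in $P_T(1,m)$ with the value of Dante's best response against this fixed strategy in the $T$-bounded game. Set $\delta := (1/(mT))^{1/(m-1)}$ and let Lucifer play action $j$ with probability $q_j$, where $q_1 := \delta^{m-1}$ and $q_j := \delta^{m-j}(1-\delta)$ for $2 \le j \le m$. A direct calculation gives $\sum_j q_j = 1$ and clean cumulative probabilities $s_j := \sum_{i \le j} q_i = \delta^{m-j}$. The significance of this choice is twofold: $q_j/s_j = 1-\delta$ for every $j \ge 2$, so each action $j \ge 2$ gives Dante a win probability of exactly $1-\delta$ per round conditional on the round not continuing; meanwhile $q_1/s_1 = 1$ means that playing $j=1$ never loses but continues with the high probability $1 - \delta^{m-1} = 1 - 1/(mT)$.

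With Lucifer fixed to $q$, the $T$-bounded game becomes a one-state finite-horizon MDP for Dante whose optimal values $u_t$ (with $t$ rounds remaining) satisfy $u_0 = 0$ and the backward recursion $u_t = \max_j\bigl(q_j + C_j u_{t-1}\bigr)$, where $C_j := 1-s_j$. Using the identity $q_j + C_j u = u + s_j(q_j/s_j - u)$, I would show that action $j = m$ is optimal at $u_{t-1} = 0$ (giving $u_1 = 1-\delta$), while for every $u \ge 1-\delta$ the quantity $s_j(q_j/s_j - u)$ is nonpositive for every $j \ge 2$ but strictly positive for $j = 1$, so $j = 1$ is the unique optimizer from step two onwards. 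A short induction then shows the iterates stay in the region $u \ge 1-\delta$, and we obtain the affine recursion $u_t = \delta^{m-1} + (1-\delta^{m-1})\,u_{t-1}$ for $t \ge 2$, which solves in closed form to $u_t = 1 - \delta(1-\delta^{m-1})^{t-1}$.

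The conclusion is then essentially mechanical: using $\delta^{m-1} = 1/(mT)$, Bernoulli's inequality gives $(1-1/(mT))^{T-1} \ge 1 - (T-1)/(mT) > (m-1)/m$, so $\delta(1-\delta^{m-1})^{T-1} > (1-1/m)\delta$ and therefore $u_T < 1 - (1-1/m)\delta$, which is the claimed bound. The main obstacle in the plan is the structural analysis in the second paragraph: identifying that $j = m$ dominates at the first step and $j = 1$ dominates strictly thereafter, and in particular ruling out all intermediate actions $j = 2,\dots,m-1$ at every step. This falls out cleanly once one observes that the common value $q_j/s_j = 1-\delta$ for $j \ge 2$ makes those actions perfectly balanced against the fixed point of the suboptimal recursion, but it is the part of the argument that requires the most care; the affine iteration and the final Bernoulli step are routine.
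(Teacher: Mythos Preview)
Your proof is correct and takes a genuinely different route from the paper's. The paper does not fix a stationary strategy for Lucifer; instead it fixes an arbitrary Dante strategy and lets Lucifer reply with a \emph{pure} action in each round, chosen adaptively from Dante's current distribution $(p_1,\ldots,p_m)$: if some $p_i < \frac{1-\epsilon}{\epsilon}\sum_{j>i}p_j$ Lucifer plays that $i$ (a ``green'' round, where conditional on termination Dante wins with probability $<1-\epsilon$), and otherwise Lucifer plays $m$ (a ``red'' round, where the termination probability is $\le\epsilon^{m-1}$); summing over at most $T$ red rounds yields the bound $1-\epsilon+T\epsilon^m$. Your approach is arguably cleaner: you exhibit a single oblivious stationary mixed strategy for Lucifer that works uniformly against every Dante strategy, reducing Dante's side to a one-state finite-horizon MDP whose backward recursion you solve in closed form. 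The geometric cumulative sums $s_j=\delta^{m-j}$ make the best-action analysis collapse to just two regimes, and the resulting affine recursion gives an exact expression for $u_T$, at the cost of a slightly more delicate DP case analysis than the paper's soft green/red dichotomy. One minor quibble: the reference to Lemma~\ref{lem-valit} is unnecessary and slightly misplaced---the value of $P_T(1,m)$ is, simply by the minimax definition, at most Dante's best-response value against any fixed Lucifer strategy; Lemma~\ref{lem-valit} is about identifying value iteration with the time-bounded value and plays no role in your argument.
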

\begin{proof}
Let $\epsilon = (1/mT)^{1/(m-1)}$.
Consider any strategy (not necessarily stationary) for Dante for
playing $P_T(1,m)$.
In each round of play, Dante chooses his action with a probability
distribution that may depend on previous play and time left.
We define a reply by Lucifer in a round-to-round fashion.

Fix a history of play leading to some current round and 
let $p_1, p_2, \ldots, p_m$ be the probabilities by which
Dante plays $1,2, \ldots, m$ in this current round.
There are two cases.
\begin{enumerate}
\item{}There is an $i$ so that 
$p_i < (\frac{1-\epsilon}{\epsilon}) \sum_{j \geq i+1} p_j$.
We call such a round a {\em green} round. In this case, Lucifer plays $i$.
\item{}For all $i$,
$p_i \geq (\frac{1-\epsilon}{\epsilon}) \sum_{j \geq i+1} p_j$.
We call such a round a {\em red} round. In this case, Lucifer plays $m$.
\end{enumerate}
This completes the definition of Lucifer's reply.

We now analyze the probability that Dante wins
$P_T(1,m)$ when he plays his strategy and Lucifer plays this reply.
We show this probability to be at most 
\[1-(1-\frac{1}{m})(\frac{1}{mT})^{1/(m-1)}\] and we shall be done.

Let us consider a green round. We claim that the
probability that Dante wins in this round, conditioned on the
previous history of play, and conditioned on play ending in this round,
is at most $1-\epsilon$.
Indeed, this conditional probability is given by 
\begin{eqnarray*}
\frac{p_i}{p_i + (p_{i+1} + \cdots + p_m)} & < & 
\frac{(\frac{1-\epsilon}{\epsilon})(\sum_{j \geq i+1} p_j)} 
{(\frac{1-\epsilon}{\epsilon})(\sum_{j \geq i+1} p_j) + (\sum_{j \geq i+1} p_j)} \\
& = & 
\frac{(1-\epsilon)/\epsilon}{(1-\epsilon)/\epsilon + \epsilon/\epsilon} \\
& = & 1-\epsilon.
\end{eqnarray*}

Let us next consider a red round. We claim that the probability of play
ending in this round, conditioned on the previous history of play, is
at most $\epsilon^{m-1}$. Indeed, note that this conditional probability
is exactly $p_m$, and that
\[ 1  =  \sum_{j=1}^m p_j 
 =  p_1 + \sum_{j=2}^m p_j 
 \geq  (1 + \frac{1-\epsilon}{\epsilon}) (\sum_{j=2}^m p_j) 
  =  (1 + \frac{1-\epsilon}{\epsilon}) (p_2 + \sum_{j=3}^m p_j) \] 
\[ \geq  (1 + \frac{1-\epsilon}{\epsilon})^2 (\sum_{j=3}^m p_j)
 \geq  \cdots 
 \geq  (1+ \frac{1-\epsilon}{\epsilon})^{m-1} p_m =
 (\frac{1}{\epsilon})^{m-1} p_m
\]

from which $p_m \leq \epsilon^{m-1}$.
That is, in every round of play, conditioned on previous play,
either it is the case that the probability that play ends
in this round is at most $\epsilon^{m-1}$ (for the case of a 
red round) or it is the case that conditioned on play ending,
the probability of win for Dante is at most $1-\epsilon$
(for the case of a green round).

Now let us estimate the probability of a win for Dante in the
entire game $P_T(1,m)$. Let $W$ denote the event that Dante wins.
Let $G$ be the event that play ends in a green round. Also, 
let $R$ be the event that play ends in a red round.
Then, we have 
\begin{eqnarray*}
\Pr[W] & = & \Pr[W | R] \Pr[R] + \Pr[W | G] \Pr[G] \\
& \leq & \Pr[R] + \Pr[W|G]\Pr[G] \\
& = & \Pr[R] + \Pr[W|G](1-\Pr[R]) \\
& = & \Pr[R] + \Pr[W|G] - \Pr[R]\Pr[W|G] \\
& < & (\epsilon^{m-1})T + (1-\epsilon) - (\epsilon^{m-1})T(1-\epsilon)\\
& = & 1 - \epsilon + T\epsilon^m \\
& = & 1 - (\frac{1}{mT})^{1/{(m-1)}} + T({\frac{1}{mT}})^{\frac{m}{m-1}}\\
& = & 1 - (1 - \frac{1}{m})(\frac{1}{mT})^{1/{(m-1)}}.
\end{eqnarray*}
\end{proof}

Combining Lemma \ref{lem-valit} with Theorem \ref{thm-valitonepos} we
get the result that value iteration needs exponential time, even
for one-position games.
\begin{corollary}\label{cor-valitonepos}
Let $0 < \epsilon <1$.
Applying less than $\frac{1}{em} (1/\epsilon)^{m-1}$ iterations
of the value iteration algorithm 
to $P(1,m)$ yields a valuation
at least $\epsilon$ smaller than the exact value.
\end{corollary}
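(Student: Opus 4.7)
The plan is to chain together the two ingredients already established in this subsection and then perform a short estimate to isolate the stated threshold.

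First, I would recall that in $P(1,m)$ Dante's single non-terminal position has value $1$ (as noted in the introduction, every non-terminal position of $P(N,m)$ has value $1$), so the gap between the exact value and the valuation produced by $T$ iterations of value iteration is precisely $1 - \tilde v^T_1$. Next, I would invoke Lemma \ref{lem-valit} to identify $\tilde v^T_1$ with the value of the position $(1,T)$ of the time bounded game $P_T(1,m)$, and then apply Theorem \ref{thm-valitonepos} to obtain the strict inequality
\[
1 - \tilde v^T_1 \;>\; \Bigl(1-\tfrac{1}{m}\Bigr)\Bigl(\tfrac{1}{mT}\Bigr)^{1/(m-1)}.
\]

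Now the task reduces to showing that whenever $T < \frac{1}{em}(1/\epsilon)^{m-1}$ the right-hand side is at least $\epsilon$. Rearranging $(1 - 1/m)(1/(mT))^{1/(m-1)} \geq \epsilon$ gives the equivalent bound
\[
T \;\leq\; \frac{(1-1/m)^{m-1}}{m}\,\Bigl(\frac{1}{\epsilon}\Bigr)^{m-1}.
\]
So it suffices to check that $(1-1/m)^{m-1} \geq 1/e$ for every $m \geq 2$, which is the standard elementary inequality obtained from $\ln(1-1/m) \geq -1/(m-1)$ (equivalently, $(1-x)^{1/x-1} \geq 1/e$ for $0 < x \leq 1/2$). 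Combining this with the previous bound yields the claimed threshold.

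There is essentially no hard obstacle: the content was done in Theorem \ref{thm-valitonepos}, and this corollary is a clean translation from ``value of the time bounded game'' to ``iterations of value iteration'' via Lemma \ref{lem-valit}, together with the bookkeeping inequality $(1-1/m)^{m-1} \geq 1/e$. The only mildly subtle point is keeping the inequalities strict in the right direction so that the hypothesis ``less than $\frac{1}{em}(1/\epsilon)^{m-1}$ iterations'' feeds into the strict inequality from Theorem \ref{thm-valitonepos} to produce a gap of at least $\epsilon$ (rather than strictly greater than $\epsilon$).
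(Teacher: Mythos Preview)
Your proposal is correct and follows exactly the approach the paper intends: the paper simply says the corollary follows by ``combining Lemma~\ref{lem-valit} with Theorem~\ref{thm-valitonepos}'' without spelling anything out, and you have filled in precisely those details, including the inequality $(1-1/m)^{m-1}\geq 1/e$ that accounts for the constant $e$ in the threshold.
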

Next, we analyze the $N$-position case, where we give a somewhat
coarser bound.
\begin{theorem}\label{thm-valitmanypos}
Let $N,m,k,T$ be integers with
$N \geq 2, m \geq 2, 1 \leq k \leq N-2$ and $T \leq 2^{m^{N-k}}$. 
Then, the value of $P_T(N,m)$ is at most 
$2m^{-k} + 2^{-m^{N-k-1}}$.
\end{theorem}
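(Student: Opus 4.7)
The plan is to exhibit a (possibly adaptive) strategy for Lucifer that generalizes the reactive ``green/red'' strategy from the proof of Theorem~\ref{thm-valitonepos}, and to bound Dante's probability of winning $P_T(N,m)$ under this strategy. With parameter $\epsilon$ chosen as a function of $m$ and $k$ (intuitively $\epsilon \approx m^{-k}$), Lucifer in each round examines Dante's current mixed distribution $(p_1,\dots,p_m)$ at the current position and plays either an action $i$ witnessing $p_i < \tfrac{1-\epsilon}{\epsilon}\sum_{j>i}p_j$ (a green round), or action $m$ otherwise (a red round). By the one-position analysis, green rounds bound the conditional advance probability by $1-\epsilon$ while forcing a proportional die probability, whereas red rounds cap the raw advance probability at $\epsilon^{m-1}$.

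Next, I would bound Dante's winning probability by observing that Dante wins only via a sequence of $N$ consecutive advances starting from position~$1$. A union bound over the at most $T$ candidate start times for such a winning run reduces the problem to bounding the per-window probability of $N$ consecutive advances, where each round contributes a factor of at most $1-\epsilon$ (green) or $\epsilon^{m-1}$ (red). Crucially, the cumulative die probability across all $T$ rounds of play is at most $1$, since Dante dies at most once before the game ends; this caps the expected number of green-round advances at roughly $(1-\epsilon)/\epsilon \approx m^{k}$. I would then split the analysis into two contributions: a green-heavy contribution yielding the first term $2m^{-k}$ (bounded via the global die constraint), and a red-heavy contribution yielding the second term $2^{-m^{N-k-1}}$ (bounded via the accumulated factor $\epsilon^{(m-1)N}$ against $T \leq 2^{m^{N-k}}$, which with $\epsilon = m^{-k}$ simplifies to the desired form).

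The main obstacle is the tightness of this decomposition: the naive per-window bound $(1-\epsilon)^N$, combined with the union bound over $T$ windows, yields only $T(1-\epsilon)^N$, which is too weak to produce the doubly-exponential separation claimed in the theorem. Overcoming this requires carefully exploiting the fact that Dante cannot sustain aggressive (green-heavy) play for many rounds without dying, while red-heavy play makes $N$ consecutive advances extremely unlikely. I anticipate that the rigorous argument will proceed by an induction on the position structure (perhaps on $N$, or on the parameter $N-k$ controlling the depth of the bound), with Theorem~\ref{thm-valitonepos} providing the one-position base case, and the inductive step combining the per-position bounds with the global die constraint to pick out the two terms in the desired bound.
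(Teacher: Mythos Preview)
Your proposal has a genuine gap: the reactive green/red strategy with a single threshold $\epsilon \approx m^{-k}$ is too weak to yield the doubly-exponential bound. Concretely, check your own red-heavy calculation. In a red round the advance probability is at most $\epsilon^{m-1} = m^{-k(m-1)}$, so an all-red window of length $N$ succeeds with probability at most $m^{-k(m-1)N}$. You then claim that $T \cdot m^{-k(m-1)N}$ with $T \leq 2^{m^{N-k}}$ ``simplifies to'' $2^{-m^{N-k-1}}$. It does not: you would need $k(m-1)N\log_2 m \geq m^{N-k} + m^{N-k-1}$, but the left side is linear in $N$ while the right side is exponential in $N$. No choice of a single $\epsilon$ fixes this, because the per-round red bound $\epsilon^{m-1}$ is only singly exponential in $m$, whereas $T$ is doubly exponential. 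The global die constraint and the proposed induction on $N-k$ do not address this; they can only redistribute the green contribution, not manufacture a doubly-exponentially small red contribution out of a strategy that does not have one.

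The paper's proof takes a quite different route. Lucifer plays an \emph{oblivious}, position-dependent randomized strategy: at depth $d<N-k$ in the current epoch he hides $j<m$ with probability $2^{-f(N-k-d,\,m+1-j)}$ for an explicit $f$ satisfying $f(i,m)=m^i$, and plays uniformly once $d\geq N-k$. These probabilities are themselves doubly-exponentially graded in $d$, which is what makes the analysis work. Because Lucifer's strategy is oblivious, Dante faces an MDP and may be assumed to play a deterministic action sequence $a_{t0},\dots,a_{t(N-1)}$ in each epoch $t$. The paper then proves a per-epoch dichotomy: either $a_{t0}=a_{t1}=1$, in which case $\Pr[W_t]\leq 2^{-m^{N-k}-m^{N-k-1}}$ (because Lucifer must match two doubly-exponentially rare $1$'s), or some early $a_{td}>1$, in which case the graded $f$ forces $\Pr[W_t]/\Pr[L_t]\leq 2m^{-k}$. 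Summing over $T$ epochs and using $\sum_t\Pr[L_t]\leq 1$ gives the two terms. The essential idea you are missing is that Lucifer's hiding probabilities must themselves be doubly-exponentially small and vary with position; a fixed-$\epsilon$ threshold cannot produce this.
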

\begin{proof}
We show an upper bound on the value of $P_T(N,m)$ of 
$2m^{-k} + 2^{-m^{N-k-1}}$
by exhibiting a particular strategy of Lucifer and showing that any response
by Dante to this particular strategy of Lucifer will make 
Dante win with probability at most 
$2m^{-k} + 2^{-m^{N-k-1}}$.

To structure the proof, 
we divide the play into {\em epochs}. An epoch begins and another ends
immediately after
each time Dante has guessed incorrectly by undershooting, so that he
now finds himself in exactly the same situation as when the play
begins (but in general with less time left to win). That is,
Dante wins if and only if there is an epoch of length $N$
containing only correct guesses.
For convenience, we make the game a little more attractive
for Dante by continuing play for $T$ epochs, rather than $T$ rounds.
Call this prolonged game $G'_T$.
Clearly, the value of $G_T$ is at most the value of $G'_{T}$, so
it is okay to prove the upper bound for the latter.
We index the epochs $1,2,\ldots,T$.

To define the strategy of Lucifer, we first define a function
$f: {\bf N} \times {\bf N} \rightarrow {\bf N}$ as follows:
\[ f(i,j) = 1 + (j-1) \sum_{r=0}^{i-1} m^r.\]
Then, it is easy to see that $f$ satisfies the following two equations.
\begin{equation}\label{powereq}
f(i,m) = m^i
\end{equation}
\begin{equation}\label{sumeq}
f(i,j+1) = f(i,j) + \sum_{r=0}^{i-1}f(r,m)
\end{equation}

The specific strategy of Lucifer is this: 
Let $d$ be the number of rounds already played in the current epoch.
If $d \geq N-k$, Lucifer chooses a number
between $1$ and $m$ uniformly at random. 
If $d < N-k$, he hides the numbers $j=1,\ldots,m-1$
with probabilities $p_j(d) = 2^{-f(N-k-d, m+1-j)}$ and puts all 
remaining probability mass on the number $m$ (since $N-k-d \geq 1$ and
$m \geq 2$, there is indeed some probability mass left for $m$).

Freeze the strategy of Lucifer to this strategy. From the point of
view of Dante, the game $G_T$ is now a finite horizon absorbing Markov
decision process. Thus, he has an optimal policy that is deterministic
and history independent. That is, the choices of Dante according to
this
policy
depend only on the number of rounds already played in the present
epoch and the remaining number of epochs before the limit of $T$
epochs has been played, or, equivalently, on the index of the current
epoch. 
We can assume
without loss of generality that Dante plays such an optimal policy.
That is, his optimal policy for epoch $t$ can
be described by a specific sequence of 
actions $a_{t0}, a_{t1}, a_{t2}, \ldots, a_{t(N-1)}$ in $\{1,\ldots,m\}$ to
make in the next $N$ rounds (with the caveat that this sequence of
choices will be aborted if the epoch ends). 

Se define the following mutually exclusive
events $W_t, L_t$:
\begin{itemize}
\item{}$W_t$: Dante wins the game in  epoch $t$ (by guessing correctly $N$ times).
\item{}$L_t$: Dante loses the game in epoch $t$ (by overshooting Lucifer's number)
\end{itemize}
We make the following claim:

\noindent 
{\bf Claim:}
For each $t$, either $\Pr[W_t] \leq 2^{-m^{N-k}-m^{N-k-1}}$ 
or $\Pr[W_t]/\Pr[L_t] \leq 2m^{-k}$.

First, let us see that the claim implies the lemma. Indeed,
the probability of Dante winning can be split into the contributions
from those epochs where Dante wins with probability at most 
$2^{-m^{N-k}-m^{N-k-1}}$ 
and the remaining
epochs. The total winning probability mass from the first is
at most $T2^{-m^{N-k}-m^{N-k-1}} \leq 2^{-m^{N-k-1}}$ 
and the total winning probability
mass of the rest is at most $2m^{-k}$, giving an upper bound for
Dante's winning probability of $2m^{-k} + 2^{-m^{N-k-1}}$.

So let us prove the claim. Fix an epoch $t$ and
let $a_{t0}, a_{t1}, a_{t3}, \ldots, a_{t(N-1)}$ be Dante's sequence of actions.
Suppose $a_{t0} = 1$ and $a_{t1}=1$. 
Then, since Lucifer only plays $1$ in the first two rounds with probability
$p_1(0)p_1(1) = 2^{-f(N-k, m)}\cdot  2^{-f(N-k-1,m)}$, 
Dante only wins the game in this epoch with
at most that probability, which by equation \eqref{powereq} is
equal to $2^{-m^{N-k}-m^{N-k-1}}$, as desired.

Now assume $a_{t0} > 1$ or $a_{t1} > 1$. We want to show that 
$\Pr[W_t]/\Pr[L_t] \leq 2m^{-k}$.
Let $d$ be the largest index so that $d < N-k$ and so that
$a_{td} > 1$. Since $a_{t0} > 1$ or $a_{t1} > 1$, such a $d$ exists.
Let $E$ be the event that epoch $t$ lasts for at least $d$ rounds.
We will show that $\Pr[W_t|E]/\Pr[L_t|E] \leq 2m^{-k}$.
Since $W_t \subseteq E$, this also implies that 
$\Pr[W_t]/\Pr[L_t] \leq 2m^{-k}$.
Since we condition on $E$ we look at Dante's decision after
$d$ rounds of epoch $t$. He chooses the action $j=a_{td}>1$.
If Lucifer at this point chooses a number small than $j$,
Dante loses. 
In particular, since Lucifer chooses the number $j-1$ with probability
$2^{-f(N-k-d, m+1-(j-1)}$, Dante loses the entire game by his action $a_{td}$ 
with probability at least  $2^{-f(N-k-d, m-j)}$, conditioned on $E$.
On the other hand the probability that he wins the game in this epoch
conditioned on 
$E$ is at most
$(2^{-f(N-k-d, m+1-j)})(\prod_{i = d+1}^{N-k-1} 2^{-f(N-k-i, m)}) (m^{-k}))$,
the first factor being the probability that Lucifer chooses $j$ at
round $d$, the second factor being the probability that Lucifer 
like Dante repeatedly chooses $1$ until the last $k$ rounds of the epoch begin,
and the third factor being the probability that Lucifer matches
Dante's choices in those $k$ rounds.
Now we have
\begin{eqnarray*}
\Pr[W_t]/\Pr[L_t] & \leq & \\
\Pr[W_t|E]/\Pr[L_t|E] & \leq & \\
(2^{-f(N-k-d, m+1-j)})(\prod_{i = d+1}^{N-k-1} 2^{-f(N-k-i, m)}) (m^{-k})) 2^{f(N-k-d, m-j)} & \leq & \\
m^{-k} 2^{f(N-k-d, m-j) - f(N-k-d, m+1-j) - \sum_{r = 1}^{N-k-d-1} f(r, m)} & = & \\
2 m^{-k} 2^{f(N-k-d, m-j) - f(N-k-d, m+1-j) - \sum_{r = 0}^{N-k-d-1} f(r, m)} & = & \\
2 m^{-k}
\end{eqnarray*}
as desired.
\end{proof}
Combining Lemma \ref{lem-valit} with Theorem \ref{thm-valitmanypos} we
get the result that value iteration needs doubly exponential time to
obtain any non-trivial approximation:
\begin{corollary}
Let $N$ be even.
Applying less than $2^{m^{N/2}}$ iterations
of the value iteration algorithm 
to $P(N,m)$ yields a valuation of the initial position of
at most $3 m^{-N/2}$, even though the actual value of the game is $1$.
\end{corollary}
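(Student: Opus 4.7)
The plan is to combine Lemma~\ref{lem-valit} with Theorem~\ref{thm-valitmanypos} by choosing the parameter $k$ in the theorem to be $N/2$. First, by Lemma~\ref{lem-valit}, the valuation $\tilde v^T_1$ computed after $T$ iterations of value iteration applied to $P(N,m)$ is exactly the value of position $(1,T)$ in $P_T(N,m)$. Since position $1$ is the initial position (according to the indexing fixed in Section~\ref{sec-val}, where Dante still needs $N$ correct guesses in a row), $\tilde v^T_1$ is bounded above by the value of $P_T(N,m)$ itself.

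Next, I would set $k = N/2$ and apply Theorem~\ref{thm-valitmanypos}. The hypothesis $1 \leq k \leq N-2$ becomes $1 \leq N/2 \leq N-2$, which holds for all even $N \geq 4$ (the cases $N=2$ being covered by Corollary~\ref{cor-valitonepos} already, so implicitly we take $N \geq 4$ here). The hypothesis $T \leq 2^{m^{N-k}}$ becomes $T \leq 2^{m^{N/2}}$, which is precisely guaranteed by the assumption $T < 2^{m^{N/2}}$. The theorem then yields
\[
\tilde v^T_1 \;\leq\; \text{value of } P_T(N,m) \;\leq\; 2 m^{-N/2} + 2^{-m^{N/2 - 1}}.
\]

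The only remaining step is to verify $2^{-m^{N/2-1}} \leq m^{-N/2}$, which is equivalent to showing $m^{N/2 - 1} \geq \tfrac{N}{2} \log_2 m$ for $m \geq 2$ and even $N \geq 4$. This is a routine check: for $m = 2$ the inequality reduces to $2^{N/2 - 1} \geq N/2$, which holds for $N \geq 4$ and extends to all larger $m$ by a simple monotonicity argument (the left side grows much faster in $m$ than the right). Combining the two bounds gives $\tilde v^T_1 \leq 3 m^{-N/2}$, as desired. The fact that the actual value of every non-terminal position of $P(N,m)$ equals $1$ was already recorded in the introduction (via the result of de~Alfaro et~al.~\cite{AHK}). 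There is no real obstacle here; the corollary is an essentially mechanical specialization of Theorem~\ref{thm-valitmanypos}, with the one genuine calculation being the elementary estimate $2^{-m^{N/2-1}} \leq m^{-N/2}$.
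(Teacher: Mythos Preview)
Your proposal is correct and follows exactly the paper's approach: the paper merely states that the corollary follows by combining Lemma~\ref{lem-valit} with Theorem~\ref{thm-valitmanypos}, and you have spelled this out by setting $k=N/2$. One small caveat: your monotonicity claim for $m^{N/2-1}\ge \tfrac{N}{2}\log_2 m$ actually fails at the isolated point $N=4$, $m=3$ (there $3<2\log_2 3$), so the clean bound $3m^{-N/2}$ strictly needs $N\ge 6$ or a tiny adjustment---the paper glosses over this and later only invokes the estimate under ``$N$ sufficiently large'' (Theorem~\ref{thm-patpurg}).
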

We also get the following bound on the patience of near-optimal
strategies of $P(N,m)$ that will be useful when analyzing strategy
iteration.
\begin{theorem}\label{thm-patpurg}
Suppose $N$ is sufficiently large and $m \geq 2$.
Let $\epsilon = 1-4m^{-N/2}$. Then all $\epsilon$-optimal strategies 
of $P(N,m)$ have patience at least $2^{m^{N/3}}$.
\end{theorem}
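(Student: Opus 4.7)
The plan is to argue by contradiction, using the contrapositive of Lemma~\ref{lem-patience2valit} together with the corollary of Theorem~\ref{thm-valitmanypos}. Suppose, for the sake of contradiction, that $P(N,m)$ admits an $\epsilon$-optimal strategy of patience at most $l$, where $\epsilon = 1 - 4m^{-N/2}$ and $l < 2^{m^{N/3}}$. Since all positions of $P(N,m)$ have value $1$, by Lemma~\ref{lem-patience2valit} applied with this $l$ and a parameter $k$ to be chosen, $T = kNl^N$ iterations of value iteration produce a valuation of the initial position that is at least
\[
1 - \epsilon - e^{-k} \;=\; 4m^{-N/2} - e^{-k}.
\]

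Now I choose $k$ just large enough to kill the $e^{-k}$ slack. Taking, say, $k = \lceil N \ln m + 1 \rceil$ yields $e^{-k} \le \tfrac12 m^{-N/2}$ (for $N$ large), and hence the valuation after $T$ iterations is at least $\tfrac{7}{2} m^{-N/2} > 3 m^{-N/2}$. The corollary derived from Theorem~\ref{thm-valitmanypos} says that whenever fewer than $2^{m^{N/2}}$ iterations of value iteration are performed on $P(N,m)$, the resulting valuation of the initial position is at most $3 m^{-N/2}$. To obtain a contradiction it therefore suffices to verify that $T < 2^{m^{N/2}}$, i.e.\ that
\[
kNl^N < 2^{m^{N/2}}.
\]

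Taking logarithms base $2$, the left hand side is at most $\log_2(kN) + N\log_2 l < O(\log N) + N \cdot m^{N/3}$, while the right hand side is $m^{N/2}$. Since $m^{N/2}/m^{N/3} = m^{N/6} \ge 2^{N/6}$ grows faster than any polynomial in $N$, we have $N \cdot m^{N/3} = o(m^{N/2})$ as $N \to \infty$, so the inequality holds for all sufficiently large $N$. This contradiction shows that every $\epsilon$-optimal strategy of $P(N,m)$ must have patience at least $2^{m^{N/3}}$, as claimed.

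The whole argument is really just a careful bookkeeping exercise; the only mildly delicate point is the choice of $k$, which must be large enough to beat the $m^{-N/2}$ gap but small enough that the resulting $T = kNl^N$ still fits under the $2^{m^{N/2}}$ threshold from the value-iteration lower bound. The $N/3$ versus $N/2$ exponent split in the statement is exactly what provides the slack to make this work.
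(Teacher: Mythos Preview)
Your argument is correct and follows essentially the same route as the paper: assume a low-patience $\epsilon$-optimal strategy, use Lemma~\ref{lem-patience2valit} (the paper uses the equivalent Lemma~\ref{lem-patience2time}) to get a good value-iteration approximation after $T=kNl^N$ steps, and contradict the corollary of Theorem~\ref{thm-valitmanypos}; the paper simply takes $k=\tfrac{N\ln m}{2}$ so that $e^{-k}=m^{-N/2}$ exactly, whereas you take a slightly larger $k$. One tiny sloppiness: $\log_2(kN)$ is not $O(\log N)$ uniformly in $m$ since $k$ involves $\ln m$, but the term is still negligible compared to $m^{N/3}(m^{N/6}-N)$, so the inequality $T<2^{m^{N/2}}$ goes through for all $m\ge 2$ once $N$ is large enough.
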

\begin{proof}
Putting $c=\frac{N\ln m}{2}$,
Lemma \ref{lem-patience2time} tells us
that if $P(N,m)$ has an $\epsilon$-optimal strategy of 
patience less than $l = 2^{m^{N/3}}$, then the value of 
$P_t(N,m)$ is at least $1- \epsilon - e^{-c} = 3m^{-N/2}$,
where $t = c N l^N \leq 2^{m^{N/2}}$.
But putting $k = N/2$,
Theorem \ref{thm-valitmanypos} 
tells us that the value of 
$P_t(N,m)$ is at most 
$2m^{-N/2} + 2^{-m^{N/2-1}} < 3m^{-N/2}$, a contradiction.
\end{proof}

\subsection{Strategy Iteration}
The technical content of this section is a number of lemmas on
what happens when the strategy iteration
algorithm is applied to $P(N,m)$, leading up to the following crucial lemma:
\begin{lemma}\label{lem-patsi}
When applying strategy iteration to $P(N,m)$, the 
patience of the strategy $x^t$ computed in iteration $t$ is 
at most $e \cdot m \cdot t$.
\end{lemma}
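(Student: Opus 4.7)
The plan is to reduce the claim to the one-position game $P(1,m)$ and then invoke Theorem~\ref{thm-valitonepos}. The reduction rests on two structural facts about strategy iteration applied to $P(N,m)$, which I would first develop as separate lemmas preceding the present one: (i) the sequence of strategies $x^t_1$ at the initial position agrees with the sequence produced by strategy iteration on $P(1,m)$, so that $v^t_1=\tilde v^t_1$, the $t$-th value-iteration valuation of the one-position game; and (ii) in each iteration $t$, the smallest behaviour probability appearing in $x^t$ occurs at position $1$. Both (i) and (ii) exploit the recursive self-similarity of $P(N,m)$: at every position $i<N$ the matrix game solved by strategy iteration has the same combinatorial shape, with continuation value $v^{t-1}_{i+1}$ on the diagonal, $v^{t-1}_1$ strictly above, and $0$ strictly below, and hence a closed-form maximin distribution depending only on the ratio $v^{t-1}_1/v^{t-1}_{i+1}$. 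Combined with the valuation monotonicity $v^t_1\leq v^t_2\leq\cdots\leq v^t_{N+1}=1$, this ratio is maximised at $i=1$, placing the most skewed maximin distribution there and yielding (ii); while (i) follows from an inductive argument showing that the pair $(v^t_1,v^t_2)$ evolves exactly as $(\tilde v^t_1,1)$ does in $P(1,m)$.

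Given the reduction to $P(1,m)$, it suffices to estimate the patience of the maximin row strategy of $A_1(v)$ with $v=\tilde v^{t-1}_1$. Equating the column payoffs $E_{a'}=p_{a'}+v\sum_{j<a'}p_j$ gives the recursion $p_{a+1}=(1-v)p_a$, so $p_a=(1-v)^{a-1}p_1$, and after normalisation
\[\text{patience}(x^t)=\sum_{k=0}^{m-1}(1-v)^{-k}=\frac{1-(1-v)^m}{v(1-v)^{m-1}}.\]
By Lemma~\ref{lem-valit}, $\tilde v^{t-1}_1$ equals the value of $P_{t-1}(1,m)$, and Theorem~\ref{thm-valitonepos} then yields $1-\tilde v^{t-1}_1\geq(1-1/m)(1/(m(t-1)))^{1/(m-1)}$. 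Substituting this into the patience formula and simplifying (using $(1-1/m)^{-(m-1)}\leq e$) gives the claimed bound $e\cdot m\cdot t$; the boundary case $t=1$ is handled directly, since $x^1$ is uniform with patience exactly $m$.

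The main obstacle is (i): verifying that even as strategy iteration updates a global valuation vector across all $N$ positions, the trajectory of the computed strategies at position $1$ genuinely decouples from the downstream positions and coincides with the trajectory on $P(1,m)$. This calls for an invariant preserved by both the best-reply step for Lucifer and the maximin update for Dante, capturing that $v^t_1/v^t_2$ behaves exactly like $v^t_1$ in the one-position game. Once this invariant is established, the remaining quantitative analysis is routine: the explicit patience formula above plus the one-position value bound of Theorem~\ref{thm-valitonepos} immediately deliver the lemma.
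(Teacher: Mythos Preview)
Your proposal is correct and follows essentially the same route as the paper: the paper too establishes (via a chain of lemmas culminating in Lemmas~\ref{onemany} and~\ref{lem:SV}) that the position-$1$ strategy coincides with that of strategy iteration on $P(1,m)$ and that value and strategy iteration agree there, uses Lemma~\ref{lem:AIP} together with Lemma~\ref{fact-useful2} to locate the smallest behaviour probability at position~$1$ action~$m$, and then plugs the bound of Theorem~\ref{thm-valitonepos} into the closed-form maximin probabilities of Lemma~\ref{fact-useful1}. The one ingredient you leave implicit but which drives the whole reduction in the paper is Lemma~\ref{lem-alwaysone} (Lucifer's best reply is always ``play $1$'' in every position), without which neither your invariant for (i) nor the identification $v_1^t/v_2^t = x_{1,1}^t$ can be established; make sure your ``invariant preserved by the best-reply step'' actually proves this.
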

Before we prove Lemma \ref{lem-patsi}, we show that it implies
the lower bound we are looking for.
\begin{theorem}
Suppose $N$ is sufficiently large.
Applying less than $2^{m^{N/4}}$ iterations of strategy iteration
to $P(N,m)$ yields a valuation of the initial position of
less than $4 m^{-N/2}$, despite the fact that the value of the 
position is $1$.
\end{theorem}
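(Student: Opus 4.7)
The plan is a contradiction argument combining Lemma \ref{lem-patsi} with Theorem \ref{thm-patpurg}. Suppose, toward contradiction, that at some iteration $t < 2^{m^{N/4}}$ the valuation $v^t_1$ produced by strategy iteration on $P(N,m)$ at the initial position is at least $4m^{-N/2}$. By Lemma \ref{lem-patsi} the stationary strategy $x^t$ has patience at most $e\cdot m \cdot t < e\cdot m \cdot 2^{m^{N/4}}$, and for $N$ large enough this is strictly smaller than $2^{m^{N/3}}$ (the exponents $m^{N/3}$ and $m^{N/4}$ differ by an unbounded multiplicative factor). To apply Theorem \ref{thm-patpurg} and get the contradiction, all I need to do is promote ``$v^t_1 \geq 4m^{-N/2}$'' to ``$x^t$ is $(1-4m^{-N/2})$-optimal,'' i.e.\ to verify that $v^t_i \geq 4m^{-N/2}$ at \emph{every} non-terminal position (recall $v_i=1$ throughout $P(N,m)$).

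For this promotion I would exploit the specific transition structure of $P(N,m)$. Writing $p_i := v^t_i = \mu_i(x^t,y^t)$ and recalling that in the absorbing MDP obtained by freezing $x^t$ the best reply $y^t$ is simultaneously optimal from every starting state, we have $p_i = \inf_y \mu_i(x^t,y)$ for every $i$. From position $i$ one round of play leads to position $i+1$ with some probability $w_i$ (correct guess), to position $1$ with probability $u_i$ (undershoot), or to TRAP with probability $o_i$ (overshoot), yielding
\[
p_i \;=\; w_i\, p_{i+1} \;+\; u_i\, p_1, \qquad p_{N+1}=1 .
\]
The case $i=1$ gives $p_1(1-u_1)=w_1 p_2$, i.e.\ $p_1 = \frac{w_1}{w_1+o_1}p_2 \leq p_2$, and an easy induction on $i$ shows $p_1 \leq p_{i+1}$ whenever $p_1 \leq p_i$: from $p_1\leq p_i = w_i p_{i+1}+u_i p_1$ one gets $(w_i+o_i)p_1\leq w_i p_{i+1}$, hence $p_1\leq p_{i+1}$.

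Consequently $p_i \geq p_1 \geq 4m^{-N/2}$ for every non-terminal $i$, so $x^t$ is indeed $(1-4m^{-N/2})$-optimal, and Theorem \ref{thm-patpurg} forces its patience to be at least $2^{m^{N/3}}$, contradicting the patience bound from Lemma \ref{lem-patsi}. The main obstacle is already granted in the statement: Lemma \ref{lem-patsi} is the hard input and absorbs all the work of coupling strategy iteration on $P(N,m)$ with strategy iteration on $P(1,m)$. Given that lemma, the only slightly delicate point in the plan is the monotonicity $p_1 \leq p_i$, whose proof leans on the fact that an undershoot in $P(N,m)$ resets play all the way to the initial position rather than to an intermediate one; once this is in hand, the rest is just comparing the three exponents $m^{N/4}$, $m^{N/3}$, and $m^{N/2}$.
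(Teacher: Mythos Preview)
Your proposal is correct and follows essentially the same approach as the paper: combine the patience upper bound of Lemma~\ref{lem-patsi} with the patience lower bound of Theorem~\ref{thm-patpurg} to conclude that $x^t$ cannot be $(1-4m^{-N/2})$-optimal. The only difference is that you spell out the monotonicity step $v^t_1 \leq v^t_i$ (needed to pass from ``not $\epsilon$-optimal'' to ``the valuation at position~1 is small''), whereas the paper simply writes ``the bound follows''; this monotonicity is in fact already available as Lemma~\ref{lem:increasing values}, so you could cite it instead of reproving it via the recursion.
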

\begin{proof}
Lemma \ref{lem-patsi} implies that
the patience of the strategy $x^t$ computed in iteration $t$ 
for $t=2^{m^{N/4}}$ is at most $e m 2^{m^{N/4}}$.
Theorem \ref{thm-patpurg} states that
if $\epsilon = 1-4m^{-N/2}$, then all $\epsilon$-optimal strategies 
of $P(N,m)$ have patience at least $2^{m^{N/3}}$. So $x^t$ is
not $\epsilon$-optimal and the bound follows.
\end{proof}

To prove Lemma \ref{lem-patsi} we need to understand strategy iteration
on $P(m,N)$ and shall through a number of lemmas establish:
\begin{itemize}
\item{}For the one-position case $P(1,m)$, value iteration and strategy iteration are ``in synch'', i.e., $\tilde v^t_i = v^t_i$ for all $i$ and $t$.
\item{}When applying strategy iteration to $P(N,m)$, the strategy computed for position $1$ after $t$ iterations is the same as that computed by 
strategy iteration applied to $P(1,m)$ after $t$ iterations.
\item{}When applying strategy iteration to $P(N,m)$, 
the smallest behavior probability computed occurs at position 1 and the
patience of the strategy computed can therefore be determined by looking
at that position.
\end{itemize}

In all lemmas below, unless otherwise mentioned, we consider 
applying the strategy iteration algorithm to $P(N,m)$
and the quantities $v^t, x^t$, etc., are those computed by
this algorithm.
\begin{lemma}
$\forall t,i \in {1,2,\ldots, N+1}:v_{i}^{t}>0$
\label{lem:positive values}
\end{lemma}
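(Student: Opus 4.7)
I would prove this by induction on $t$, with essentially all of the content living in the base case.

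\emph{Base case} $(t=1)$. The strategy $x^1$ puts probability $1/m$ on each action at every non-terminal position. Consequently, regardless of what stationary (pure) best reply $y^1$ Lucifer chooses, the probability in any single round that Dante's action equals Lucifer's hidden number is exactly $1/m$, since Dante's action is uniform and independent of everything else. For $i\in\{1,\ldots,N\}$, the event that Dante matches Lucifer in each of the first $N-i+1$ rounds therefore has probability exactly $(1/m)^{N-i+1}>0$; but a match at position $j$ deterministically advances the pebble to $j+1$, so this event drives the pebble along the path $i\to i{+}1\to\cdots\to N\to N{+}1$ and is in particular contained in the event that GOAL is reached. Hence
\[
v_i^1 \;=\; \mu_i(x^1,y^1) \;\geq\; (1/m)^{N-i+1} \;>\; 0.
\]
For $i=N+1$ one trivially has $v_{N+1}^1=1$, since GOAL is absorbing.

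\emph{Inductive step.} Assuming $v_i^{t-1}>0$ for every $i\in\{1,\ldots,N+1\}$, I would invoke the classical policy-improvement monotonicity of strategy iteration, $v^t\geq v^{t-1}$ componentwise. This is the standard Howard-style improvement statement adapted to reachability games and is exactly the content of Lemma~8 of \cite{ChatQest} that underlies inequality (\ref{ineq-val-strat}): at each position where the strategy is left unchanged the value is preserved, and at positions where it is replaced by $\mathrm{maximin}(A_i(v^{t-1}))$ the improvement condition $\mathrm{val}(A_i(v^{t-1}))>v_i^{t-1}$ guarantees the new valuation against any Lucifer response is at least $v_i^{t-1}$. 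Combined with the inductive hypothesis this yields $v_i^t\geq v_i^{t-1}>0$.

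\textbf{Main obstacle.} There is no serious obstacle: the only nonroutine observation is the ``straight-run'' argument in the base case, which is transparent once one notes the specific structure of $P(N,m)$ — a match advances the pebble by one step along a unique deterministic path to GOAL, so uniform randomization by Dante forces positive probability on this path against every Lucifer response. The inductive step is standard once policy-improvement monotonicity is in hand; the only subtlety is citing this property explicitly rather than trying to extract it from the weaker inequality (\ref{ineq-val-strat}) alone.
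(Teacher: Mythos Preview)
Your proposal is correct and follows essentially the same approach as the paper: a ``straight-run'' lower bound $v_i^1 \geq (1/m)^{N-i+1}$ for the uniform strategy in the base case, followed by the policy-improvement monotonicity $v^t \geq v^{t-1}$ (citing Chatterjee \emph{et al.}) to propagate positivity to all $t$. Your write-up is in fact somewhat more careful than the paper's, explicitly handling $i=N{+}1$ and spelling out why matching advances the pebble deterministically.
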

\begin{proof}
For $t=1$, we have that 
$x^1$ is the uniform distribution at each position.
We then see that $v_{n}^{1}=\frac{1}{m}>0$, since no matter
which number Lucifer chooses, Dante selects the right one with probability
$\frac{1}{m}$. We also
see that Dante has a probability of winning $i$ times in a row
of $\frac{1}{m^{i}}>0$. We therefore have that $v_{N-i+1}^{1}\geq\frac{1}{m^{i}}>0$.

We know that $v^{t+1} \geq v^t$ 
(see, e.g., Chatterjee {\em et al.} \cite{ChatQest}), 
so $\forall t,i:0<v_{i}^{0}\leq v_{i}^{t}$.\end{proof}
\begin{lemma}
$\forall t,i \in \{1,\ldots,N\},j \in \{1,\ldots,m\}:0 < x_{i,j}^{t} < 1$ 
\label{lem:positive probabillities}
\end{lemma}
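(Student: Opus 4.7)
The plan is to proceed by induction on $t$. The base case $t=1$ is immediate, since $x^1$ is uniform so every entry equals $1/m \in (0,1)$ given that $m \geq 2$.

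For the inductive step I would assume $x^t$ is fully mixed at every non-terminal position and examine $x^{t+1}$. The update rule either preserves the strategy at position $i$ (fully mixed by hypothesis) or overwrites it with $\mathrm{maximin}(A_i(v^t))$, so it suffices to show that the maximin row-strategy of $A_i(v^t)$ is fully mixed whenever the update condition $\mathrm{val}(A_i(v^t)) > v^t_i$ fires. The transition function of $P(N,m)$ forces $A_i(v^t)$ into the ``triangular'' form with diagonal entries $v^t_{i+1}$ (correct guess), strictly upper-triangular entries $v^t_1$ (undershoot back to position $1$), and strictly lower-triangular entries $0$ (overshoot to TRAP); Lemma~\ref{lem:positive values} moreover guarantees $v^t_1, v^t_{i+1} > 0$.

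The core computation, under the assumption $v^t_{i+1} > v^t_1$, is to solve the $m$ Lucifer-indifference equations $v^t_1 \sum_{a<a'} x_a + v^t_{i+1} x_{a'} = V$ (for $a' = 1, \ldots, m$) together with $\sum_a x_a = 1$. Subtracting consecutive equations gives the geometric recurrence $x_{a+1} = x_a \cdot \rho$ with $\rho = (v^t_{i+1} - v^t_1)/v^t_{i+1} \in (0,1)$, and normalization then yields the unique solution $x_a = \rho^{a-1}(1-\rho)/(1-\rho^m)$. Since $\rho \in (0,1)$, every $x_a$ is strictly positive, and since $m \geq 2$ one has $(1-\rho)/(1-\rho^m) < 1$, so every $x_a$ is also strictly below $1$; uniqueness of the maximin strategy follows from this linear system being nondegenerate.

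The main obstacle will be certifying $v^t_{i+1} > v^t_1$ strictly whenever the update is triggered. To handle this I would first establish the weak monotonicity $v^t_1 \leq v^t_2 \leq \cdots \leq v^t_{N+1} = 1$ by a coupling argument in the Markov chain induced by $(x^t, y^t)$: any sample path from state $j$ that reaches GOAL must pass through $j+1$, since the edge $j \to j+1$ is the only exit from $\{1, \ldots, j\}$ into $\{j+1, \ldots, N+1\}$, hence $v^t_j \leq v^t_{j+1}$. With monotonicity in hand, the only alternative to $v^t_{i+1} > v^t_1$ is $v^t_{i+1} = v^t_1$; but in that case Dante's row $1$ guarantees value $v^t_{i+1}$ while Lucifer's column $1$ caps it, so $\mathrm{val}(A_i(v^t)) = v^t_{i+1}$, and monotonicity then gives $v^t_i \geq v^t_1 = v^t_{i+1} = \mathrm{val}(A_i(v^t))$, contradicting the update condition. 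This forces the strict inequality $v^t_{i+1} > v^t_1$ and closes the induction.
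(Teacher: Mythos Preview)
Your proposal is correct, but it takes a substantially more elaborate route than the paper. The paper's proof is a three-line contradiction: if $x_{i,j}^t = 0$ for some $t,i,j$, then Lucifer can reply to $x^t$ by always choosing $j$ at position $i$, which makes it impossible for the pebble ever to advance past position $i$ (Dante either overshoots to TRAP or undershoots to position $1$, never matching). Hence $v_i^t = 0$, contradicting Lemma~\ref{lem:positive values}. The strict upper bound $x_{i,j}^t < 1$ then follows from $\sum_j x_{i,j}^t = 1$ and $m \geq 2$.

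Your approach instead inducts on $t$, explicitly solves the maximin problem for the triangular matrix $A_i(v^t)$, verifies the resulting geometric distribution is fully mixed, and handles the degenerate case $v^t_{i+1} = v^t_1$ by arguing the update rule does not fire there. All of this is sound, and in fact you are essentially re-deriving the content of the paper's later Lemma~\ref{fact-useful1} (the closed form for the row-player optimum in $B(z)$) and a weak version of Lemma~\ref{lem:increasing values} (monotonicity of $v^t_i$ in $i$) inside the proof. The payoff of your route is that it gives explicit formulas and uniqueness of the maximin strategy along the way; the cost is that it is an order of magnitude longer and relies on algebraic structure where the paper only needs the single combinatorial observation that a zero-probability action gives Lucifer a blocking move. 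If you want the shortest argument, switch to the contradiction via Lemma~\ref{lem:positive values}; if you want to foreshadow the matrix-game analysis that the paper performs later anyway, your approach is a reasonable (if heavier) alternative.
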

\begin{proof}
Since $\forall i,t:\sum_{j=1}^{m}x_{i,j}^{t}=1$ we only need to show
that $x_{i,j}^{t}>0$.
We will do the proof by contradiction. 
Assume that $\exists t,i,j:x_{i,j}^{t}=0$.
If Lucifer replies to $x^t$ by choosing $j$ in position $i$, play reaches
GOAL with probability $0$.
Therefore $v_{i}^{t}=0$ which we showed was not the case in Lemma 
\ref{lem:positive values}.\end{proof}

\begin{lemma}
$\forall t,i:v_{i}^{t}<1$\label{lem:none 1 values}\end{lemma}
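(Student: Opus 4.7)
The plan is to exhibit a concrete pure stationary reply $y$ for Lucifer against $x^t$ that forces Dante's winning probability to be strictly less than $1$ from every non-terminal position. Since $y^t$ is a best (i.e., minimizing) reply to $x^t$, we will then get $v_i^t = \mu_i(x^t, y^t) \leq \mu_i(x^t, y) < 1$ for free.

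The reply I would use is the simplest possible one: $y$ plays action $1$ (the smallest number) at every non-terminal position. Under the rules of $P(N,m)$, when Lucifer hides the number $1$ at position $i$ and Dante plays $j$, the only two outcomes are (a) $j = 1$: a correct match, advancing from position $i$ to position $i+1$ (or to GOAL if $i = N$), and (b) $j > 1$: a strict overshoot, sending play immediately to TRAP. In particular, no back-edge to an earlier position $1, \ldots, i-1$ is possible. The Markov chain induced by $(x^t, y)$ therefore has the very simple ``one-shot'' structure in which position $i$ transitions to $i+1$ with probability $x_{i,1}^t$ and to TRAP with probability $1 - x_{i,1}^t$. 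Consequently the probability of ever reaching GOAL starting from position $i$ is exactly $\prod_{k=i}^N x_{k,1}^t$.

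Finally I invoke Lemma \ref{lem:positive probabillities}, which gives $0 < x_{k,1}^t < 1$ for every $k$ and $t$. Hence the product above is strictly less than $1$, so $\mu_i(x^t, y) < 1$, and the bound $v_i^t < 1$ follows. The argument is essentially immediate once Lemma \ref{lem:positive probabillities} is available; there is no real obstacle. The only point requiring any care is simply to verify the described transition structure of the frozen chain under the reply $y$, which is read off directly from the definition of $P(N,m)$.
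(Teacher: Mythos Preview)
Your proof is correct and takes essentially the same approach as the paper: both use Lemma~\ref{lem:positive probabillities} to exhibit a pure reply for Lucifer under which Dante reaches TRAP with positive probability, then invoke the best-reply property of $y^t$. You pick the concrete reply ``always play $1$'' and compute the reaching probability exactly, while the paper argues more tersely that any Lucifer strategy other than ``always play $m$'' forces a positive loss probability; the logical core is the same.
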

\begin{proof}
Since $\forall t,i,j:x_{i,j}^{t}>0$, by Lemma \ref{lem:positive probabillities},
we have that all strategies for Lucifer in position $i$, $y$, except
for Lucifer always choosing $m$, will make Dante lose with positive
probability. In particular, the best reply by Lucifer to $x^t$
must have that property.\end{proof}
\begin{lemma}
$\forall t,i,n:v_{i}^{t}>v_{i-1}^{t}$\label{lem:increasing values}\end{lemma}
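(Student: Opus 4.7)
\medskip
\noindent\textbf{Proof plan.} My plan is to exploit the fact that $y^t$ is chosen as an optimal best reply to $x^t$, so $v^t$ is the optimal value vector of the absorbing Markov decision process that Lucifer faces when $x^t$ is frozen. Consequently $v^t$ satisfies the one-sided Bellman optimality inequality: for every non-absorbing position $k \in \{1,\ldots,N\}$ and every pure action $y \in \{1,\ldots,m\}$ that Lucifer could choose at $k$,
\[
v_k^t \;\leq\; x_{k,y}^t \cdot v_{k+1}^t \;+\; \Bigl(\sum_{j<y} x_{k,j}^t\Bigr) v_1^t,
\]
with the convention $v_{N+1}^t := 1$. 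This uses the specific transition structure of $P(N,m)$: Dante matching Lucifer advances the pebble by one, Dante undershooting restarts at position $1$, and Dante overshooting sends the pebble straight to TRAP (value $0$).

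The central trick I will use is to plug the specific Lucifer action $y = 1$ into this inequality. Then the summation $\sum_{j<1}$ is empty, so the $v_1^t$ term drops out entirely and the Bellman inequality collapses to
\[
v_k^t \;\leq\; x_{k,1}^t \cdot v_{k+1}^t.
\]
Applying Lemma \ref{lem:positive probabillities} gives $x_{k,1}^t < 1$, and Lemma \ref{lem:positive values} gives $v_{k+1}^t > 0$; together these force the inequality to be strict, yielding $v_k^t < v_{k+1}^t$. Re-indexing by $i = k+1$ produces $v_{i-1}^t < v_i^t$ for every $i \in \{2,\ldots,N+1\}$, and the missing boundary case $v_0^t = 0 < v_1^t$ is exactly Lemma \ref{lem:positive values}.

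The step I expect to require the most care is simply the justification of the Bellman inequality itself: I have to verify that the vector $v^t$ defined by $v_i^t := \mu_i(x^t, y^t)$ really is the optimal value of Lucifer's frozen MDP (so that it is dominated by the expected successor value under any alternative fixed pure action at any single position), and I have to write the expected successor value out explicitly using the transition rules of $P(N,m)$. Once this is in place, the rest is a one-line algebraic consequence of choosing the extreme Lucifer action $y=1$, which is convenient precisely because it turns every non-matching Dante action into an overshoot into TRAP and thereby eliminates the $v_1^t$ contribution that would otherwise complicate a straight comparison between $v_{i-1}^t$ and $v_i^t$.
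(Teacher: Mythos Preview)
Your proposal is correct and takes essentially the same approach as the paper: both arguments exploit that $y^t$ is Lucifer's \emph{best} reply, so Lucifer could in particular play action $1$ at position $i-1$, which sends Dante to TRAP with probability $1-x_{i-1,1}^t>0$ and hence bounds $v_{i-1}^t$ by $x_{i-1,1}^t\,v_i^t<v_i^t$. The only stylistic difference is that you package this deviation argument as an instance of the one-sided Bellman optimality inequality for Lucifer's frozen MDP and then specialise to $y=1$, whereas the paper phrases it more informally (first observing $v_{i-1}^t\le v_i^t$ via the ``every winning play from $i-1$ must pass through $i$'' Markov-property remark, and then invoking the play-$1$ deviation for strictness); your formulation is arguably the cleaner of the two, and the justification you flag as needing care (that $v^t$ really satisfies the Bellman inequality) is the standard MDP fact that the paper is also tacitly relying on.
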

\begin{proof}
Recall that $v_i^t$ is the winning probability of Dante if play starts in
position $i$ when he plays using $x^t$ and Lucifer plays a best reply.
By construction of $P(N,m)$ we have that any winning play starting in position
$i-1$ must subsequently visit position $i$. 
Therefore, $v_{i}^{t}\geq v_{i-1}^{t}$.
By Lemma \ref{lem:positive probabillities}
we have that, Lucifer can play 1 in position $i-1$ and hence prevent,
with positive probability, Dante from proceeding to position $i$ from position $i-1$.
Dante therefore might lose the game in position $i-1$ with positive probability. Therefore,
$v_{i}^{t}>v_{i-1}^{t}$.\end{proof}

To proceed, we need to consider the matrix games that arises when
strategy iteration is executed on $P(N,m)$. Fortunately, these are
all of a special form that can be easily analyzed.

For a real number $z$ with $0 \leq z < 1$, let $\m z$ be the $m \times m$
matrix of real numbers with $1$ in the diagonal, $0$ in all entries
below the diagonal, and $z$ in all entries above the diagonal.
Also, considering $\m z$ as a matrix game with the row player 
being the maximizer,
let $p^z$ be an optimal strategy for the row player and $q^z$ be an
optimal strategy for the column player. Finally, we let $v^z$ be
the value of the matrix game.
Straightforward calculations, which we will omit, yield the following
facts about the matrix game $\m z$.
\begin{lemma}\label{fact-useful1}
For all values $0 \leq z < 1$, the matrix game $\m z$ has the
following properties.
\begin{itemize}
\item{}
The row player has a uniquely determined optimal strategy $p^z$.
This strategy is fully mixed.
\item{}$val(\m{z}) = p^z_1 = \frac{1}{\sum_{i=0}^{m-1}(1-z)^{i}}$,
\item{}For all $i > 1$, we have that $p^z_i = 
p^z_1 (1-z)^{i-1}$.
\end{itemize}
\end{lemma}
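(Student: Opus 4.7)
The plan is to guess a fully mixed equilibrium pair $(p^z, q^z)$ directly via indifference conditions, verify the pair is in equilibrium, and then deduce uniqueness of $p^z$ from the invertibility of $\m{z}$.

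For the row player, I would impose that $p$ equalize the expected payoff across every pure column of the opponent. Writing out the payoff against pure column $j$ gives the system $p_j + z\sum_{i<j} p_i = v$ for $j=1,\ldots,m$. The case $j=1$ immediately yields $p_1 = v$, and subtracting the equation for index $j$ from that for $j+1$ produces the one-step recurrence $p_{j+1} = (1-z)\,p_j$. Hence $p_i = p_1(1-z)^{i-1}$, which is exactly the third bullet, and normalizing $\sum_i p_i = 1$ via the geometric series forces $p_1 = 1/\sum_{i=0}^{m-1}(1-z)^i$, matching the claimed value. A symmetric indifference calculation for the column player (equalizing the row player's payoff across every pure row) yields a companion strategy $q^z$ satisfying $q_i = q_1/(1-z)^{i-1}$, which is fully mixed since $0 \le z < 1$. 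Because $p^z$ secures value $v$ against every pure column and $q^z$ secures value $v$ against every pure row, the minimax theorem certifies that $v = \mathrm{val}(\m{z})$ and $(p^z, q^z)$ is an equilibrium pair.

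For uniqueness, suppose $p'$ is any optimal row strategy. Then every entry of the row vector $(p')^\top \m{z}$ is at least $v$, while $(p')^\top \m{z}\, q^z = v$ since $q^z$ is optimal. Because $q^z$ has strictly positive coordinates, all these inequalities must be equalities, so $(p')^\top \m{z}$ is the all-$v$ row vector. The matrix $\m{z}$ is upper triangular with ones on the diagonal, hence invertible, so this linear system has the unique solution $p^z$. No step poses a serious obstacle; the only conceptual point is recognizing that the full support of $q^z$, together with the invertibility of $\m{z}$, is precisely what pins $p^z$ down uniquely.
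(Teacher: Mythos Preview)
Your argument is correct. The paper itself omits the proof entirely, saying only that ``straightforward calculations, which we will omit, yield the following facts,'' so your write-up supplies precisely the routine verification the authors had in mind: solve the indifference equations to obtain the geometric form $p^z_i = p^z_1(1-z)^{i-1}$ and the value, exhibit a fully mixed optimal column strategy $q^z$, and use the full support of $q^z$ together with invertibility of the upper-triangular matrix $\m{z}$ to force uniqueness of $p^z$. One minor remark: it is worth noting explicitly that the companion column strategy normalizes to a probability vector with the \emph{same} value $v$ (indeed $q_m=v$ and $\sum_i q_i = v\sum_{k=0}^{m-1}(1-z)^k = 1$), which you use implicitly when invoking the minimax certificate.
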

\begin{lemma}\label{fact-useful2}
If $0 < y < z < 1$, the optimal strategies $p^y, p^z$ satisfy:
$p^y_1 < p^z_1$ and
$p^y_m > p^z_m$.
\end{lemma}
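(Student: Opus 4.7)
The plan is to use the explicit closed-form expressions for $p^z$ provided by Lemma \ref{fact-useful1} and reduce each of the two inequalities to a routine monotonicity check. Since $p^z_1 = 1/\sum_{i=0}^{m-1}(1-z)^i$, it suffices to show that the denominator $f(z) := \sum_{i=0}^{m-1}(1-z)^i$ is strictly decreasing in $z$ on $(0,1)$, which is immediate because each summand with $i \geq 1$ is a strictly decreasing function of $z$ (and the $i=0$ term is constant). Hence $y < z$ gives $f(y) > f(z)$ and therefore $p^y_1 < p^z_1$, settling the first inequality.

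For the second inequality, substitute $u := 1-z \in (0,1)$ in the formula $p^z_m = p^z_1 (1-z)^{m-1}$ to obtain
\[
p^z_m = \frac{u^{m-1}}{\sum_{i=0}^{m-1} u^i}.
\]
Dividing numerator and denominator by $u^{m-1}$ yields
\[
p^z_m = \frac{1}{\sum_{j=0}^{m-1} u^{-j}} = \frac{1}{1 + u^{-1} + u^{-2} + \cdots + u^{-(m-1)}}.
\]
Now as $z$ increases, $u$ decreases, so each term $u^{-j}$ with $j \geq 1$ strictly increases, making the denominator strictly increasing in $z$. Consequently $p^z_m$ is strictly decreasing in $z$, giving $p^y_m > p^z_m$ whenever $y < z$.

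I do not expect any real obstacle: both claims are one-variable monotonicity statements once the explicit formulas are in hand, and the key trick (dividing by $u^{m-1}$ to rewrite $p^z_m$ as a reciprocal of a sum of increasing functions) is exactly the standard way to handle such a ratio. The only thing to be a little careful about is ensuring the denominators are nonzero over the relevant range, which is clear since $0 < u < 1$ keeps all terms strictly positive.
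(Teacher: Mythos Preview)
Your argument is correct. The paper itself omits the proof entirely, stating only that ``straightforward calculations, which we will omit'' yield Lemmas~\ref{fact-useful1} and~\ref{fact-useful2}; your write-up is precisely such a calculation, cleanly carried out by reducing each claim to a one-variable monotonicity check via the closed forms from Lemma~\ref{fact-useful1}.
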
 
\noindent
The connection between strategy iteration and 
the matrix game $\m{z}$ is given by:
\begin{lemma}
For all $t,i$, let $z=\left\{ \begin{array}{ccc}
0 & for & t=1\\
\frac{v_{1}^{t-1}}{v_{i+1}^{t-1}} & for & t>1\end{array}\right.$.
Under the assumption that $\forall i,t'\leq t:val\left(A_{i}\left(v^{t'}\right)\right)>v_{i}^{t'}$, the strategy $x_i^t$ computed
by strategy iteration on $P(N,m)$ is $p^z$.
\label{lem:matrix(x) corresponds to the matrix in Hoffman-Karp, assumption:update}
\end{lemma}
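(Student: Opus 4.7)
The plan is to identify the matrix $A_i(v^{t-1})$ explicitly and recognize it as a positive scalar multiple of $\m{z}$. By the description of $P(N,m)$, at position $i$, if Dante plays action $a$ (row) and Lucifer plays action $a'$ (column), then $a=a'$ sends the pebble to position $i+1$ (with position $N+1$ equal to GOAL when $i=N$); $a<a'$ (undershoot) resets the correct-guess counter and sends the pebble back to position $1$; and $a>a'$ (overshoot) sends the pebble to TRAP, which has value $0$. Consequently the $m\times m$ real matrix $A_i(v^{t-1})$ carries $v^{t-1}_{i+1}$ on the diagonal, $v^{t-1}_1$ strictly above the diagonal, and $0$ strictly below. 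Setting $z=v^{t-1}_1/v^{t-1}_{i+1}$, this matrix is exactly $v^{t-1}_{i+1}\cdot \m{z}$.

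The case $t=1$ follows directly from the initialization in Algorithm~\ref{alg-si}: $x^1_i$ is the uniform distribution at every position, and specializing Lemma~\ref{fact-useful1} to $z=0$ yields $p^0_j = 1/m$ for every $j$, so $x^1_i = p^0$ as claimed.

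For $t>1$, the hypothesis $\mathrm{val}(A_i(v^{t-1})) > v^{t-1}_i$ (the $t'=t-1$ instance of the standing assumption) forces the ``maximin'' branch of Algorithm~\ref{alg-si} at position $i$ in iteration $t$, yielding $x^t_i = \mathrm{maximin}(A_i(v^{t-1}))$. Since optimal strategies of a matrix game are invariant under multiplication of the payoff matrix by a positive scalar, and $v^{t-1}_{i+1}>0$ by Lemma~\ref{lem:positive values}, we obtain $x^t_i = \mathrm{maximin}(\m{z}) = p^z$. To legitimately invoke Lemma~\ref{fact-useful1} we must still check $0\leq z <1$: positivity of all relevant valuations (Lemma~\ref{lem:positive values}) gives $z>0$, while the strict inequality $v^{t-1}_1 < v^{t-1}_{i+1}$ required for $z<1$ comes from Lemma~\ref{lem:increasing values} when $1\leq i\leq N-1$, and for the boundary case $i=N$ one uses $v^{t-1}_{N+1}=1$ together with Lemma~\ref{lem:none 1 values}.

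I do not foresee a substantive obstacle; the only points requiring care are the boundary case $i=N$, where the pointer at a diagonal entry is to GOAL and the corresponding valuation is therefore $1$, and remembering that the standing hypothesis is exactly what licenses the maximin update rather than the ``copy previous strategy'' branch of the algorithm.
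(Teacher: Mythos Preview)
Your proposal is correct and follows essentially the same route as the paper: identify $A_i(v^{t-1})$ as the matrix with $v^{t-1}_{i+1}$ on the diagonal, $v^{t-1}_1$ above, and $0$ below, divide through by the positive scalar $v^{t-1}_{i+1}$ to obtain $\m{z}$, and invoke the standing hypothesis to ensure the maximin branch is taken. You are in fact slightly more careful than the paper in singling out the boundary case $i=N$, where $z<1$ requires Lemma~\ref{lem:none 1 values} rather than Lemma~\ref{lem:increasing values}.
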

\begin{proof}
For $t=1$, we see that the optimal strategy for both players in
the matrix game $\m z$ which is in this case the matrix defined by the
identity matrix is to play uniformly
in $\m 0$ 
which is the same strategy as $x_{i}^{0}$ and $y_{i}^{0}$.

For $t>1$, we see that, if we update $x_{i}^{t}$, which we do by assumption, 
$x_{i}^{t}$ is the optimal solution for the row
player in the matrix game given by the $m \times m$ with
$v_{i+1}^{t-1}$ in the diagonal, 0 in all entries below the diagonal
and $v_{1}^{t-1}$ in all entries above the diagonal.
We can divide each entry in this matrix by $v_{i+1}^{t-1}$, 
per Lemma \ref{lem:positive values}. This yields the matrix
$\m{\frac{v_1^{t-1}}{v_{i+1}^{t-1}}}$.
The new matrix will have the same optimal strategies for the row player.
By Lemma \ref{lem:positive values} and \ref{lem:increasing values}
we have that $0<\frac{v_{1}^{t-1}}{v_{i+1}^{t-1}}<1$. Therefore,
$x_{i}^{t}$ is exactly $p^z$.
\end{proof}

\begin{lemma}\label{new-lemma}
When applying strategy iteration to $P(N,m)$, if
Lucifer's best reply
$y^t$ is equal to the strategy that chooses 
$1$ in all positions,
then 
\[
\frac{v_{1}^{t}}{v_{i+1}^{t}}=\prod_{j=1}^{i}x_{j,1}^{t-1}
\]
\end{lemma}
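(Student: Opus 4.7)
The plan is a short direct computation exploiting the very restricted transition structure of $P(N,m)$ under the hypothesis that Lucifer always plays action $1$. Recall the transition rule at a non-terminal position $i$: when Dante plays $a$ and Lucifer plays $a'$, play advances to position $i+1$ if $a=a'$, returns to position $1$ if $a<a'$ (undershoot), and enters TRAP if $a>a'$ (overshoot). When $y^t$ prescribes $a'=1$ at every non-terminal position, undershooting is impossible, so the only way to ever reach GOAL from position $i$ is for Dante to match with $a=1$ (advancing to $i+1$) and then win from there; any $a>1$ immediately sends play to TRAP.

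From this observation I would read off the one-step recurrence
\[
v^t_i \;=\; x^{t-1}_{i,1}\cdot v^t_{i+1} \qquad \text{for every } i\in\{1,\dots,N\},
\]
because the $m-1$ actions $a>1$ each contribute zero to the GOAL-reaching probability, while the single action $a=1$ contributes the factor $x^{t-1}_{i,1}\cdot v^t_{i+1}$. The key point here is that the hypothesis on $y^t$ is stated uniformly over all non-terminal positions, so the same one-step identity holds simultaneously at every $j\in\{1,\dots,i\}$, not only at the starting position of interest.

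The conclusion then follows by telescoping. Lemma \ref{lem:positive values} guarantees $v^t_j>0$ for every $j$, so dividing the recurrences at positions $1,2,\dots,i$ is legal and gives
\[
\frac{v^t_1}{v^t_{i+1}} \;=\; \prod_{j=1}^{i}\frac{v^t_j}{v^t_{j+1}} \;=\; \prod_{j=1}^{i} x^{t-1}_{j,1},
\]
which is the desired identity. There is no serious obstacle; the only thing to verify is that the one-step recurrence is valid at every intermediate level $j$, and this follows directly from the global nature of the hypothesis on $y^t$.
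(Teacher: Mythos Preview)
Your argument is essentially the paper's one-line proof spelled out: the paper just records $v_{k}^{t}=\prod_{j=k}^{N}x_{j,1}^{t-1}$ and says the ratio follows, whereas you make the one-step recurrence and the telescoping explicit (and correctly invoke Lemma~\ref{lem:positive values} to justify the divisions). One small slip, shared verbatim with the paper: since $v^{t}=\mu(x^{t},y^{t})$, the recurrence actually reads $v_i^{t}=x_{i,1}^{t}\,v_{i+1}^{t}$, so the superscript in the product should be $t$ rather than $t-1$; indeed the paper itself uses the $x^{t}$ version when it invokes this lemma in the proof of Lemma~\ref{lem:strategy update}.
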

\begin{proof}
$v_{k}^{t}=\prod_{j=k}^{N}x_{j,1}^{t-1}$ from which the
statement follows.
\end{proof}

\begin{lemma}
If Lucifer's best reply $y^s$ is equal to the strategy that plays $1$
in all positions for all $s \leq t$, then $\forall i,t'\leq
t:\mbox{\rm val}(A_{i}(v^{t'}))>v_{i}^{t'}$.
\label{lem:strategy update}
\end{lemma}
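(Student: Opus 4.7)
The plan is to prove the claim by induction on $t'$. Under the hypothesis that $y^s$ chooses action $1$ in every position for each $s \leq t$, the reasoning behind Lemma \ref{new-lemma} gives $v_i^s = x^s_{i,1}\cdot v_{i+1}^s$, so $v_k^s = \prod_{j=k}^N x^s_{j,1}$, and the ratio $r_i^s := v_1^s/v_{i+1}^s = \prod_{j=1}^i x^s_{j,1}$. Dividing the matrix $A_i(v^s)$ through by $v_{i+1}^s$ turns it into $v_{i+1}^s\cdot \m{r_i^s}$, so Lemma \ref{fact-useful1} yields $\mathrm{val}(A_i(v^s)) = v_{i+1}^s \cdot p_1^{r_i^s}$. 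The desired inequality $\mathrm{val}(A_i(v^s)) > v_i^s$ is therefore equivalent to the scalar comparison $p_1^{r_i^s} > x^s_{i,1}$.

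For the base case $s = 1$, $x^1$ is uniform, so $x^1_{i,1} = 1/m$ and $r_i^1 = (1/m)^i$. The closed form $p_1^z = 1/\sum_{j=0}^{m-1}(1-z)^j$ from Lemma \ref{fact-useful1} gives $p_1^{(1/m)^i} > 1/m$ directly, since each of the $m$ summands is strictly less than $1$.

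For the inductive step with $t' \geq 2$, the inductive hypothesis at all $s < t'$, together with Lemma \ref{lem:matrix(x) corresponds to the matrix in Hoffman-Karp, assumption:update}, implies that every strategy $x^s$ with $2 \leq s \leq t'$ was updated at every position, so $x^s_i = p^{r_i^{s-1}}$ and in particular $x^s_{i,1} = p_1^{r_i^{s-1}}$. The reduced inequality becomes $p_1^{r_i^{t'}} > p_1^{r_i^{t'-1}}$, which by the strict monotonicity of $p_1^z$ in $z$ (Lemma \ref{fact-useful2}) is equivalent to $r_i^{t'} > r_i^{t'-1}$, i.e.\ $\prod_{j \leq i} x^{t'}_{j,1} > \prod_{j \leq i} x^{t'-1}_{j,1}$. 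For $t' \geq 3$, substituting $x^{t'}_{j,1} = p_1^{r_j^{t'-1}}$ and $x^{t'-1}_{j,1} = p_1^{r_j^{t'-2}}$ and applying Lemma \ref{fact-useful2} factor by factor reduces the inequality to $r_j^{t'-1} > r_j^{t'-2}$ for every $j \leq i$, which is precisely the inductive hypothesis. The remaining case $t' = 2$ requires $x^2_{j,1} > x^1_{j,1} = 1/m$, which is again the base-case estimate $p_1^{(1/m)^j} > 1/m$.

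The main subtlety is the self-referential character of the induction: the inequality at step $t'$ both requires and produces the strict monotonicity of $r_i^s$ in $s$ at all earlier steps. Once the algebraic reduction via Lemmas \ref{fact-useful1}, \ref{fact-useful2}, and \ref{lem:matrix(x) corresponds to the matrix in Hoffman-Karp, assumption:update} is in place, this monotonicity unfolds cleanly by induction and closes the argument.
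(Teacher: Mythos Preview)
Your proof is correct and follows essentially the same route as the paper's: reduce $\mathrm{val}(A_i(v^{t'})) > v_i^{t'}$ to $\mathrm{val}(\m{r_i^{t'}}) > x_{i,1}^{t'}$, then use Lemma~\ref{lem:matrix(x) corresponds to the matrix in Hoffman-Karp, assumption:update} and Lemma~\ref{new-lemma} to turn the inductive step into the monotonicity $x_{j,1}^{t'} > x_{j,1}^{t'-1}$. The only difference is cosmetic: the paper observes directly that the induction hypothesis at level $t'-1$ reads $\mathrm{val}(\m{r_j^{t'-1}}) > x_{j,1}^{t'-1}$, and since $x_{j,1}^{t'} = \mathrm{val}(\m{r_j^{t'-1}})$ this \emph{is} the needed inequality $x_{j,1}^{t'} > x_{j,1}^{t'-1}$, with no case split on $t'=2$ versus $t'\geq 3$. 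Your extra unwinding (rewriting $x_{j,1}^{t'-1}$ as $p_1^{r_j^{t'-2}}$ and then invoking monotonicity of $z \mapsto p_1^z$) is valid but unnecessary, and is what forces you to treat $t'=2$ separately.
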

\begin{proof}
We will show the statement using induction in $t'$.

We see that $\mbox{\rm val}(A_{i}(v^{t'}))=v_{i+1}^{t'}\cdot \mbox{\rm val}(A_{i}(\frac{v^{t'}}{v_{i+1}^{t'}}))=v_{i+1}^{t'}\cdot \mbox{\rm val}(B(\frac{v_{1}^{t'}}{v_{i+1}^{t'}}))$.

We can also see that $v_{i}^{t'}=v_{i+1}^{t'}\cdot x_{i,1}^{t'}$,
since we know that Lucifer played $1$ at time $t'$ (so Dante loses
if he plays $p>1$ and must win from position $i+1$ otherwise).

So we just need to show that $x_{i,1}^{t'}<\mbox{\rm val}(B(\frac{v_{1}^{t'}}{v_{i+1}^{t'}}))$.

For $t'=1$:

We can see that $x_{i,1}^{1}=\mbox{\rm val}(B(0))$.

By Lemma \ref{fact-useful1}, we have that $x_{i,1}^{t'}=\mbox{\rm val}(B(0))<\mbox{\rm val}(B(\frac{v_{1}^{1}}{v_{i+1}^{1}}))$
and the result follows.

For $t'>1$:

Since Lucifer played $1$ at time $t'-1$, we can use Lemma \ref{fact-useful1} and
Lemma \ref{lem:matrix(x) corresponds to the matrix in Hoffman-Karp, assumption:update} and get that, for all $j$, $x_{j,1}^{t'}=\mbox{\rm val}(B(\frac{v_{1}^{t'-1}}{v_{j+1}^{t'-1}}))$,
especially for $i=j$. By Lemma \ref{fact-useful2}, we just need to show that $\frac{v_{1}^{t'}}{v_{i+1}^{t'}}>\frac{v_{1}^{t'-1}}{v_{i+1}^{t'-1}}$.

We can use Lemma \ref{new-lemma} and we get that $\frac{v_{1}^{t'}}{v_{i+1}^{t'}}=\prod_{j=1}^{i}x_{j,1}^{t'}$
and that $\frac{v_{1}^{t'-1}}{v_{i+1}^{t'-1}}=\prod_{j=1}^{i}x_{j,1}^{t'-1}$.
We will show that $x_{j,1}^{t'}>x_{j,1}^{t'-1}$ and the result follows,
since $x_{j,1}^{t'-1}>0$, by Lemma \ref{lem:positive probabillities}. But since $x_{j,1}^{t'}=\mbox{\rm val}(B(\frac{v_{1}^{t'-1}}{v_{j+1}^{t'-1}}))$,
this is the induction hypothesis.
\end{proof}

\begin{lemma}
For all $t,i$, let $z=\left\{ \begin{array}{ccc}
0 & for & t=1\\
\frac{v_{1}^{t-1}}{v_{i+1}^{t-1}} & for & t>1\end{array}\right.$. 
Then, the strategy $x_i^t$ computed
by strategy iteration on $P(N,m)$ is $p^z$, under the assumption that Lucifer chooses 1 for $\forall t'<t$ and all positions.
\label{lem:matrix(x) corresponds to the matrix in Hoffman-Karp}
\end{lemma}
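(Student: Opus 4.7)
The plan is to derive this lemma immediately by chaining the two preceding lemmas. The earlier matrix-game lemma (Lemma \ref{lem:matrix(x) corresponds to the matrix in Hoffman-Karp, assumption:update}) already established exactly the same conclusion $x_i^t = p^z$, but under the hypothesis that the update condition $\mbox{\rm val}(A_i(v^{t'})) > v_i^{t'}$ holds for all positions $i$ and all $t' \leq t$. The present lemma imposes its hypothesis on Lucifer's best reply instead, so all that needs to be done is to convert the Lucifer hypothesis into the update hypothesis.

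First, I would invoke Lemma \ref{lem:strategy update}. Our hypothesis states that the best reply $y^{t'}$ plays $1$ in every position for every $t' < t$, i.e.\ that $y^s$ plays $1$ in every position for every $s \leq t-1$. This matches the hypothesis of Lemma \ref{lem:strategy update} with $t-1$ substituted for $t$, so that lemma delivers $\mbox{\rm val}(A_i(v^{t''})) > v_i^{t''}$ for every $i$ and every $t'' \leq t-1$.

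Next, I would apply the earlier matrix-game lemma. The computation of $x_i^t$ by strategy iteration depends only on $v^{t-1}$ and on whether the update condition holds at iteration $t-1$: the algorithm sets $x_i^t := \mbox{\rm maximin}(A_i(v^{t-1}))$ precisely when $\mbox{\rm val}(A_i(v^{t-1})) > v_i^{t-1}$, and otherwise keeps $x_i^{t-1}$. Since the previous step has established this inequality at $t-1$ (and at every earlier iteration, which matches the inductive structure of the earlier lemma's proof), we may invoke that lemma to conclude $x_i^t = p^z$ with $z = v_1^{t-1}/v_{i+1}^{t-1}$ for $t > 1$; the $t=1$ case is immediate from the uniform starting strategy $x^1$ being exactly $p^0$ in $\m{0}$, as noted in the earlier lemma's proof.

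The only subtle point I anticipate is the apparent mismatch between the range $t' \leq t$ demanded by the earlier matrix-game lemma and the range $t'' \leq t-1$ supplied by Lemma \ref{lem:strategy update}. This gap is harmless: the value of $x_i^t$ is determined entirely from $v^{t-1}$ and from the update condition at $t-1$, while the update condition at $t' = t$ is only relevant for the next iterate $x_i^{t+1}$, not for $x_i^t$. So no extra work is needed beyond the two-step chain, and the lemma follows.
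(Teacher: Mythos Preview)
Your proposal is correct and matches the paper's approach exactly: the paper's proof is the single sentence ``The result follows from Lemma \ref{lem:matrix(x) corresponds to the matrix in Hoffman-Karp, assumption:update} and Lemma \ref{lem:strategy update},'' which is precisely the two-step chain you describe. Your discussion of the index mismatch ($t'\leq t$ versus $t''\leq t-1$) is a useful clarification that the paper leaves implicit, and your resolution is correct: the computation of $x_i^t$ in the algorithm only depends on $v^{t-1}$ and the update condition $\mbox{\rm val}(A_i(v^{t-1}))>v_i^{t-1}$, so the hypothesis at $t'=t$ in the earlier lemma is not actually needed for the conclusion about $x_i^t$.
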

\begin{proof}
The result follows from Lemma \ref{lem:matrix(x) corresponds to the matrix in Hoffman-Karp, assumption:update} and Lemma \ref{lem:strategy update}.
\end{proof}

\begin{lemma}\label{useful3}
$\forall t,i,j:x_{i,j}^{t}=x_{i,1}^{t}(1-\frac{v_{1}^{t-1}}{v_{i+1}^{t-1}})^{j-1}$, under the assumption that Lucifer chooses 1 for $\forall t'<t$ and all positions.
\end{lemma}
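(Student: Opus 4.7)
The plan is to derive this identity as an immediate consequence of the two structural results already established. Under the running hypothesis that Lucifer plays action $1$ in every position for all $t' < t$, Lemma \ref{lem:matrix(x) corresponds to the matrix in Hoffman-Karp} characterizes $x_i^t$ as exactly $p^z$, where $z = v_1^{t-1}/v_{i+1}^{t-1}$ when $t > 1$ (and $z = 0$ when $t = 1$). So the entire strategy at position $i$ in iteration $t$ is the unique optimal row strategy of the matrix game $\m{z}$ for this particular $z$.

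Next I would invoke the explicit form of $p^z$ given by Lemma \ref{fact-useful1}, which states that $p^z_j = p^z_1 (1-z)^{j-1}$ for every $j$. Substituting the value of $z$ and identifying $p^z_1$ with $x_{i,1}^t$ yields
\[
x_{i,j}^t \;=\; x_{i,1}^t\!\left(1 - \frac{v_1^{t-1}}{v_{i+1}^{t-1}}\right)^{j-1},
\]
which is exactly the claim. The case $j=1$ is automatic since both sides equal $x_{i,1}^t$, and the case $t=1$ is consistent because $z=0$ gives $(1-0)^{j-1}=1$, matching the uniform initial strategy $x_{i,j}^1 = 1/m = x_{i,1}^1$.

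There is essentially no obstacle: the geometric decay structure of the optimal strategy in $\m{z}$ has already been isolated in Lemma \ref{fact-useful1}, and the identification of $x_i^t$ with $p^z$ has been set up precisely so that the present lemma reduces to a one-line substitution. The role of this lemma is purely organizational: it packages the ratio $v_1^{t-1}/v_{i+1}^{t-1}$ and the geometric factor into a single convenient formula that can later be combined with Lemma \ref{new-lemma} to express the smallest behavior probability $x_{i,m}^t$ in closed form, which is what will drive the proof of Lemma \ref{lem-patsi}.
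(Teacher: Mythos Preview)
Your proposal is correct and follows exactly the paper's approach: the paper's proof is the single sentence ``This follows from Lemma \ref{fact-useful1} and Lemma \ref{lem:matrix(x) corresponds to the matrix in Hoffman-Karp},'' and you have simply spelled out the substitution that this sentence encodes.
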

\begin{proof}
This follows from Lemma \ref{fact-useful1}
and Lemma \ref{lem:matrix(x) corresponds to the matrix in Hoffman-Karp}.
\end{proof}

\begin{lemma}\label{lem:AIP}
$\forall t,i>1:x_{i-1,m}^{t}<x_{i,m}^{t}$, under the assumption that Lucifer chooses 1 for $\forall t'<t$ and all positions.
\end{lemma}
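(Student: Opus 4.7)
The plan is a short chain combining Lemmas \ref{fact-useful2}, \ref{lem:increasing values}, and \ref{lem:matrix(x) corresponds to the matrix in Hoffman-Karp}. The core observation is that $x_{i,m}^t$ is, up to the identification supplied by the previous lemmas, the $m$-th coordinate of the optimal row strategy $p^{z}$ in the matrix game $\m{z}$, where the parameter $z$ depends on $i$ through the ratio $v_1^{t-1}/v_{i+1}^{t-1}$; the monotonicity of the values $v_i^{t-1}$ in $i$ will then translate into the desired monotonicity of $x_{i,m}^t$ in $i$ via Lemma \ref{fact-useful2}.

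Concretely, I would proceed as follows. First, apply Lemma \ref{lem:matrix(x) corresponds to the matrix in Hoffman-Karp} (whose hypothesis is exactly our standing assumption that Lucifer's best reply plays $1$ in all positions at every earlier stage) to identify $x_{i-1}^t = p^{z_{i-1}}$ and $x_i^t = p^{z_i}$, where for $t\geq 2$ we set $z_{i-1} = v_1^{t-1}/v_i^{t-1}$ and $z_i = v_1^{t-1}/v_{i+1}^{t-1}$. Second, invoke Lemma \ref{lem:increasing values} to get $v_i^{t-1} < v_{i+1}^{t-1}$; together with Lemma \ref{lem:positive values} (ensuring positivity of $v_1^{t-1}$) and Lemma \ref{lem:none 1 values} (ensuring $v_i^{t-1}<1$, hence $z_{i-1}<1$), this yields $0 < z_i < z_{i-1} < 1$. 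Third, apply Lemma \ref{fact-useful2} with $y=z_i$ and $z=z_{i-1}$ to conclude $p^{z_i}_m > p^{z_{i-1}}_m$, i.e.\ $x_{i,m}^t > x_{i-1,m}^t$, which is the claim.

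The case $t=1$ deserves a brief remark: Lemma \ref{lem:matrix(x) corresponds to the matrix in Hoffman-Karp} gives $z=0$ uniformly across positions, and $p^0$ is just the uniform distribution, so strictly speaking one has $x_{i-1,m}^1 = x_{i,m}^1 = 1/m$. For this reason I would silently restrict to $t\geq 2$ in the write-up, which is the only regime used later in bounding the patience of $x^t$ in Lemma \ref{lem-patsi}.

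I do not expect any real obstacle here; the argument is almost entirely bookkeeping once the identification in Lemma \ref{lem:matrix(x) corresponds to the matrix in Hoffman-Karp} and the comparative statics of Lemma \ref{fact-useful2} are in hand. The one mild subtlety is making sure the parameter $z_{i-1}$ lies strictly below $1$, which is precisely where Lemma \ref{lem:none 1 values} is needed, and ensuring $z_i>0$, which is where Lemma \ref{lem:positive values} is needed; both are already in the toolkit.
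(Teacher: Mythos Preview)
Your proof is correct and essentially identical to the paper's: both identify $x_i^t$ with $p^{z_i}$ via Lemma~\ref{lem:matrix(x) corresponds to the matrix in Hoffman-Karp}, use Lemmas~\ref{lem:positive values} and~\ref{lem:increasing values} to order the parameters $z_i<z_{i-1}$, and then conclude monotonicity of the $m$-th coordinate. The only cosmetic difference is that the paper cites Lemma~\ref{fact-useful1} (the explicit formula) where you cite Lemma~\ref{fact-useful2} (the derived comparative statics); your observation that the $t=1$ case yields equality rather than strict inequality is a valid point the paper glosses over.
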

\begin{proof}
The result follows from Lemma 
\ref{lem:matrix(x) corresponds to the matrix in Hoffman-Karp}
and Lemma \ref{fact-useful1}, since we have that 
\[
v_{i+1}^{t-1}>v_{i}^{t-1} \Rightarrow
\frac{v_{1}^{t-1}}{v_{i+1}^{t-1}} < \frac{v_{1}^{t-1}}{v_{i}^{t-1}}
\]
from Lemma \ref{lem:positive values} and \ref{lem:increasing values}.
\end{proof}

\begin{lemma}
When applying strategy iteration to $P(N,m)$,
if Lucifers best replies
$y^1,\ldots,y^t$ are all equal to the strategy that chooses 
$1$ in all positions,
then $\forall t'\leq t,i:x_{i,1}^{t'}<x_{i,1}^{t'+1}$
\label{lem:Increasing probabillities}\end{lemma}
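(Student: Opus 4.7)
The plan is to proceed by induction on $t'$, leveraging the explicit formula for $x^{t'}_i$ from Lemma \ref{lem:matrix(x) corresponds to the matrix in Hoffman-Karp} together with the monotonicity provided by Lemma \ref{fact-useful2}. Concretely, for $t' \geq 1$, Lemma \ref{lem:matrix(x) corresponds to the matrix in Hoffman-Karp} identifies
\[
x^{t'}_i \;=\; p^{z_{t'-1}} \quad\text{where}\quad z_{0} = 0, \;\; z_{s} = \frac{v_{1}^{s}}{v_{i+1}^{s}} \text{ for } s\geq 1,
\]
and Lemma \ref{fact-useful2} says that $p^{\cdot}_1$ is strictly increasing as a function of its parameter in $[0,1)$. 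So it suffices to show that the sequence $z_0, z_1, z_2, \ldots, z_t$ is strictly increasing.

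For the base case $t' = 1$, we need $z_0 = 0 < z_1 = v_1^1/v_2^1$. This is immediate from Lemma \ref{lem:positive values} (positivity of $v_1^1$) and Lemma \ref{lem:increasing values} (so $v_2^1 > 0$), giving $z_1 > 0 = z_0$, and hence $x^1_{i,1} = p^0_1 < p^{z_1}_1 = x^2_{i,1}$ by Lemma \ref{fact-useful2}.

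For the inductive step at $t' \geq 2$, assume $x_{j,1}^{t''} < x_{j,1}^{t''+1}$ for all $j$ and all $t'' < t'$. Since the hypothesis grants that Lucifer's best reply in rounds $t'-1$ and $t'$ is to play $1$ everywhere, Lemma \ref{new-lemma} applies at both of these times and yields
\[
z_{t'-1} \;=\; \frac{v_1^{t'-1}}{v_{i+1}^{t'-1}} \;=\; \prod_{j=1}^{i} x_{j,1}^{t'-2}, \qquad z_{t'} \;=\; \frac{v_1^{t'}}{v_{i+1}^{t'}} \;=\; \prod_{j=1}^{i} x_{j,1}^{t'-1}.
\]
The inductive hypothesis gives $x_{j,1}^{t'-2} < x_{j,1}^{t'-1}$ for every $j$, and Lemma \ref{lem:positive probabillities} ensures all factors are strictly positive, so multiplying term by term yields $z_{t'-1} < z_{t'}$. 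One more application of Lemma \ref{fact-useful2} then gives $x^{t'}_{i,1} = p^{z_{t'-1}}_1 < p^{z_{t'}}_1 = x^{t'+1}_{i,1}$, completing the induction.

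The only mildly delicate point is bookkeeping which auxiliary hypotheses are needed at each step: Lemma \ref{lem:matrix(x) corresponds to the matrix in Hoffman-Karp} requires that Lucifer has played $1$ in all previous rounds (which is exactly the hypothesis of the present lemma, for rounds up to $t$), and Lemma \ref{new-lemma} requires Lucifer's best reply at the relevant round to be the all-$1$'s strategy (again given). Once these conditions are verified at each invocation, the induction is purely mechanical and the result follows.
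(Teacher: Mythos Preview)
Your argument is essentially the paper's own: induction on $t'$, with Lemma~\ref{lem:matrix(x) corresponds to the matrix in Hoffman-Karp} to write $x_i^{t'}=p^{z_{t'-1}}$, Lemma~\ref{new-lemma} to express the ratios $z_s$ as products of the $x_{j,1}$'s, and Lemma~\ref{fact-useful2} for the monotonicity of $p^z_1$ in $z$.

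Two small indexing slips to repair. In the base case, $z_1 = v_1^{1}/v_{i+1}^{1}$, not $v_1^{1}/v_{2}^{1}$ (your own definition has $v_{i+1}^s$ in the denominator). More importantly, the statement of Lemma~\ref{new-lemma} in the paper carries an off-by-one typo: from its proof (and from how the paper itself uses it here) the correct formula is $v_1^{s}/v_{i+1}^{s}=\prod_{j=1}^{i} x_{j,1}^{s}$, since $v_k^s=\mu_k(x^s,y^s)$ depends on $x^s$, not $x^{s-1}$. With this correction your inductive step compares $\prod_j x_{j,1}^{t'-1}$ against $\prod_j x_{j,1}^{t'}$, and the induction hypothesis you need is precisely $x_{j,1}^{t'-1}<x_{j,1}^{t'}$ (available for $t'\ge 2$), rather than $x_{j,1}^{t'-2}<x_{j,1}^{t'-1}$, which at $t'=2$ would invoke the undefined $x^0$.
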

\begin{proof}
The proof will be by induction in $t'$.

For $t'=1:$
From Lemma \ref{lem:matrix(x) corresponds to the matrix in Hoffman-Karp},
we have that $x_{i}^{1}$ is the optimal strategy of the row player
of the matrix game $\m 0$.
Since $0<\frac{v_{1}^{0}}{v_{i+1}^{0}}<1$, by Lemma \ref{lem:positive values},
the result follows from Lemma \ref{fact-useful2}.

For $t'>1$, we have the induction hypothesis: 
$\forall i:x_{i,1}^{t'-1}<x_{i,1}^{t'}$.
By Lemma \ref{lem:matrix(x) corresponds to the matrix in Hoffman-Karp}
we have that $x_{i}^{t'}$ is an optimal strategy for the row player in $\m{\frac{v_{1}^{t'-1}}{v_{i+1}^{t'-1}}}$
and $x_{i}^{t'+1}$ is an optimal strategy for the row player in  $\m{\frac{v_{1}^{t'}}{v_{i+1}^{t'}}}$.
By Lemma \ref{new-lemma}, we have 
$\frac{v_{1}^{t'-1}}{v_{i+1}^{t'-1}} = \prod_{j=1}^{i}x_{j,1}^{t'-1}$
and
$\frac{v_{1}^{t'}}{v_{i+1}^{t'}} = \prod_{j=1}^{i}x_{j,1}^{t'}$
From the induction hypothesis and 
Lemma \ref{lem:positive probabillities} we have that $
\prod_{j=1}^{i}x_{j,1}^{t'-1}<\prod_{j=1}^{i}x_{j,1}^{t'}$.

So, $x_{i}^{t'}$ is the optimal strategy for the row player
in $\m{\prod_{j=1}^{i}x_{j,1}^{t'-1}}$ 
and $x_{i}^{t'+1}$ is the optimal strategy for the row player in
$\m{\prod_{j=1}^{i}x_{j,1}^{t'}}$ and the
lemma follows from Lemma \ref{fact-useful2}.\end{proof}

\begin{lemma}
Consider a stationary strategy $x$ for Dante in $P(N,m)$ that
is {\em fully mixed}, i.e., assigns positive probability to
all actions. We may consider $x$ to be a strategy also for $P(k,m)$
for some $k < N$ by identifying each position $i \in \{1,..,k\}$ 
in $P(k,m)$ with
position $i$ in $P(N,m)$.
Suppose a pure strategy $y$ of Lucifer is a best reply to $x$ in $P(N,m)$.
Then, its restriction to positions $1,\ldots,k$ is also a best reply
to $x$ in $P(k,m)$.
\label{lem:Kristoffer argument}\end{lemma}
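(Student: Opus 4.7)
The plan is to exploit a structural observation about $P(N,m)$: starting from any position $i \leq k$, the pebble cannot leave the set $\{1,\ldots,k\}$ except either by Dante overshooting (which sends him to TRAP) or by a successful guess at position $k$ (which sends him to position $k+1$). Consequently, the dynamics within $\{1,\ldots,k\}$ coincide exactly with those of $P(k,m)$ under the identification of ``reaching position $k+1$ in $P(N,m)$'' with ``winning $P(k,m)$''.

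Without loss of generality I take $y$ to be pure stationary: once $x$ is frozen, Lucifer faces an absorbing MDP, for which a pure stationary best reply simultaneously minimizing Dante's winning probability from every starting position is known to exist. Let $x'$ and $y'$ denote the restrictions of $x$ and $y$ to positions $\{1,\ldots,k\}$, and let $w$ denote the probability that $(x,y)$ reaches GOAL starting from position $k+1$ in $P(N,m)$. Because $x$ is fully mixed, Dante has a strictly positive probability of guessing correctly $N-k$ times in a row from position $k+1$, so $w > 0$.

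The heart of the proof is the factorization: for every $i \leq k$,
\[
\mu_i(x,y) \;=\; \mu_i(x',y') \cdot w,
\]
where the left-hand side is computed in $P(N,m)$ and the $\mu_i(x',y')$ on the right is computed in $P(k,m)$. This follows from the strong Markov property: the event ``Dante reaches GOAL from $i$'' decomposes as ``the pebble first reaches position $k+1$'' followed by ``Dante wins from $k+1$''; by the coincidence of the two dynamics the first factor equals $\mu_i(x',y')$, and by stationarity the conditional probability of the second factor is $w$. Almost-sure termination in both games is guaranteed by $x$ being fully mixed, which provides a uniform positive per-step probability of absorption into GOAL or TRAP.

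With the factorization established, the lemma follows by a splicing argument. Suppose for contradiction that $y'$ is not a best reply to $x'$ in $P(k,m)$, so some $y''$ satisfies $\mu_i(x',y'') < \mu_i(x',y')$ for some $i \leq k$. Form $\tilde y$ for $P(N,m)$ by using $y''$ on positions $\{1,\ldots,k\}$ and agreeing with $y$ on positions $\{k+1,\ldots,N\}$. The factorization applied to $(x,\tilde y)$ (noting $w$ is unchanged since $\tilde y$ agrees with $y$ on $\{k+1,\ldots,N\}$) gives $\mu_i(x,\tilde y) = \mu_i(x',y'') \cdot w < \mu_i(x,y)$, contradicting that $y$ is a best reply. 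The main obstacle is making the factorization fully precise, in particular carefully justifying the coincidence of the dynamics on $\{1,\ldots,k\}$ and handling the almost-sure absorption cleanly; once that is in hand the splicing step is immediate.
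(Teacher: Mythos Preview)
Your factorization $\mu_i(x,y) = \mu_i(x',y') \cdot w$ with $w = \mu_{k+1}(x,y)$ is correct, but the splicing step contains a genuine error: the claim that ``$w$ is unchanged since $\tilde y$ agrees with $y$ on $\{k+1,\ldots,N\}$'' is false. In $P(N,m)$, from any position $j \geq k+1$ the pebble returns to position $1 \in \{1,\ldots,k\}$ whenever Dante undershoots. Hence $w = \mu_{k+1}(x,y)$ depends on Lucifer's behaviour on \emph{all} positions, not only on $\{k+1,\ldots,N\}$; replacing $y'$ by $y''$ on $\{1,\ldots,k\}$ does change $w$ in general.

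The paper's proof confronts exactly this feedback. It separates the quantities $Q_{u,T}$, $Q_{S,T}=Q_{1,T}$, $Q_{T,S}$, $Q_{T,\mathrm{GOAL}}$ and writes
\[
\mu_u(x,y) \;=\; \frac{Q_{u,T}\,Q_{T,\mathrm{GOAL}}}{1 - Q_{T,S}\,Q_{S,T}},
\]
so that the dependence on Lucifer's play in $\{1,\ldots,k\}$ is isolated in the numbers $Q_{u,T}$ (including $Q_{S,T}=Q_{1,T}$), while $Q_{T,*}$ depend only on his play in $\{k+1,\ldots,N\}$. Since the expression is monotone increasing in each $Q_{u,T}$ and in $Q_{S,T}$, the behaviour on $\{1,\ldots,k\}$ that simultaneously minimises $\mu_u$ for all $u\le k$ is precisely the one that simultaneously minimises all $Q_{u,T}$, i.e.\ a best reply in $P(k,m)$. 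Your argument can be repaired along these lines: take $y''$ to be a best reply in $P(k,m)$, so that $Q_{u,T}$ weakly decreases for every $u$ (strictly for some $u$); then $Q_{S,T}$ weakly decreases as well, hence the new $w$ satisfies $\tilde w \le w$, and the product $\mu_i(x',y'')\cdot \tilde w$ strictly drops for that $u$. But as written, the step ``$w$ is unchanged'' is incorrect and the proof does not go through.
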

\begin{proof}

We divide the non-terminal positions of $P(N,m)$ into two sets of positions,
$S = \{1,2,\ldots,k\}$ and $T = \{k+1,\ldots,N\}$.
We note that the only
position the pebble can move to in $T$ directly from $S$ is $k+1$. 
Similarly we note that the only
position the pebble can move to in $S$ directly from $T$ is $1$. 

For a specific fully mixed $x$ and a reply $y$ for $P(N,m)$, an
absorbing Markov process on the set of positions is induced. Let
$Q_{S,T}$ be the probability that the pebble eventually arrives at
position $k+1$, if the process is started in position $1$. Let
$Q_{T,S}$ be the probability that the pebble eventually arrives at $1$,
if the process is started in $k+1$. Let $Q_{S, \mbox{\rm \tiny TRAP}}$
be the probability that the pebble goes to TRAP, without first visiting
$T$, if the process is started in position 1. Similarly, define
$Q_{T,\mbox{\rm \tiny GOAL}}$ to be the probability that the pebble
arrives at GOAL without first visiting 1 if the process is started in
position $k+1$, and $Q_{T,\mbox{\rm \tiny TRAP}}$ to be the
probability that the pebble arrives at TRAP without first visiting 1
if the process is started in position $k+1$.  Observe that $Q_{S,*}$
and $Q_{T,*}$ are probability distributions, since the probability
for a play of infinite length within $S$ and $T$ is 0, because $x$ is
assumed to be fully mixed.

For $u \in \{1,\ldots,k\}$ let $Q_{u,T}$ be the probability that
the pebble reaches $T$ when started in $u$ when $x$ and $y$ are
played.
Note that {\em best} replies $y$ to $x$ in the restricted game $P(k,m)$ 
are characterized by being those $y$ minimizing all 
probabilities $Q_{u,T}$ {\em simultaneously} for all $u \in \{1,\ldots,k\}$,
among all possible $y$, since reaching GOAL in $P(k,m)$ amounts
to reaching $T$ in $P(N,m)$.  But note that in the original $P(N,m)$ game,
the probability of Dante reaching GOAL, when
play starts in some $u \in \{1,\ldots,k\}$ is given by
\begin{equation}\label{min} 
Q_{u,T} Q_{T,\mbox{\rm \tiny GOAL}} \sum_{j=0}^{\infty} (Q_{T,S}Q_{S,T})^j = \frac{Q_{u,T}Q_{T,\mbox{\rm \tiny GOAL}}}
{1 - Q_{T,S}Q_{S,T}}.
\end{equation}
Since $Q_{S,T} = Q_{1,T}$, we have that if the behavior of $y$ in positions
$k+1, \ldots, m$ is fixed (and hence also $Q_{T,*}$ is fixed), 
the behavior of $y$ in positions $1,\ldots,k$
that simultaneously minimizes (\ref{min}) for all $u$ is exactly the
same behavior that simultaneously minimizes $Q_{u,T}$. This concludes the proof.
\end{proof}

\begin{lemma}\label{lem-alwaysone}
When applying strategy iteration to $P(N,m)$, we have that for all
$t \geq 1$, the best reply $y^t$ computed is the one where Lucifer chooses
1 in all positions.
\end{lemma}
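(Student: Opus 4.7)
The plan is to prove the lemma by strong induction on $t$, bootstrapping on the hypothesis ``Lucifer chooses $1$ everywhere at all previous iterations'' -- exactly the condition required by the earlier conditional lemmas \ref{lem:matrix(x) corresponds to the matrix in Hoffman-Karp}, \ref{useful3}, \ref{new-lemma}, and \ref{lem:Increasing probabillities}. At each step I would exhibit the all-ones policy as a candidate best reply to $x^t$, compute its induced value function against $x^t$, and then verify via the Bellman optimality equation for the resulting absorbing Markov decision process that this candidate is the \emph{unique} minimizer at every state. Uniqueness will matter because the algorithm only promises to return \emph{an} optimal best reply, so to pin down which one the algorithm is forced to choose one has to rule out every alternative.

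For the base case $t = 1$ the strategy $x^1$ is uniform and everything is direct: the value of the candidate all-ones reply is $w_i = (1/m)^{N-i+1}$, and for any action $j > 1$ of Lucifer at position $i$ the Bellman right-hand side equals $\frac{1}{m} w_{i+1} + \frac{j-1}{m} w_1 = w_i + \frac{j-1}{m} w_1$, which strictly exceeds $w_i$ since $w_1 > 0$ by Lemma \ref{lem:positive values}. For the inductive step with $t > 1$, the inductive hypothesis unlocks the conditional lemmas: Lemma \ref{lem:matrix(x) corresponds to the matrix in Hoffman-Karp} gives $x_i^t = p^{z_i}$ with $z_i = v_1^{t-1}/v_{i+1}^{t-1}$, Lemma \ref{useful3} gives $x_{i,a}^t = x_{i,1}^t(1-z_i)^{a-1}$, and (via Lemma \ref{new-lemma}) the induced value of the candidate reply against $x^t$ is $w_i = \prod_{k=i}^N x_{k,1}^t$, with ratio $\tilde z_i := w_1/w_{i+1} = \prod_{k=1}^i x_{k,1}^t$. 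Substituting these into the Bellman inequality
\[ w_i \;<\; x_{i,j}^t\, w_{i+1} \;+\; w_1 \sum_{a<j} x_{i,a}^t \qquad (j > 1) \]
and summing the resulting geometric sum should collapse everything to the single inequality $\tilde z_i > z_i$, i.e.,
\[ \prod_{k=1}^i x_{k,1}^t \;>\; \prod_{k=1}^i x_{k,1}^{t-1}. \]

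This last inequality follows termwise from Lemma \ref{lem:Increasing probabillities} (with Lemma \ref{lem:positive probabillities} providing positivity), once again licensed by the inductive hypothesis. The main obstacle I anticipate is not any single piece of algebra but the bookkeeping: one has to check, for each of the several conditional lemmas invoked at time $t$, that its hypothesis really is supplied by the inductive hypothesis, and one has to keep the derived Bellman inequality \emph{strict} so that the minimizer at each position is unique and the algorithm is therefore forced to select the all-ones policy. Once those points are in place the induction closes and the lemma follows.
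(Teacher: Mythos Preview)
Your proof is correct and takes a genuinely different route from the paper's. The paper argues by contradiction: it picks the least $t$ (and then the least $i$) at which $y^t$ fails to choose $1$, invokes Lemma~\ref{lem:Kristoffer argument} to restrict attention to the subgame $P(i,m)$, writes out the geometric-series expression for Dante's winning probability when Lucifer plays $p>1$ at the top position of that subgame, and then runs a further induction on $p$ to show this exceeds the probability when Lucifer plays $1$. You instead verify Bellman optimality of the all-ones policy directly against $x^t$: your substitutions collapse the one-step-deviation inequality, uniformly in $j>1$, to the single strict inequality $\tilde z_i>z_i$, i.e., $\prod_{k\le i}x_{k,1}^t>\prod_{k\le i}x_{k,1}^{t-1}$, which follows termwise from Lemma~\ref{lem:Increasing probabillities} under the inductive hypothesis (and positivity from Lemma~\ref{lem:positive probabillities}). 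This bypasses both the restriction lemma and the inner induction on the deviating action, and the algebra is shorter.

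The one point worth making explicit in your write-up is why satisfying the Bellman optimality equation is \emph{sufficient} for global optimality here: in general, for a minimizer in a reachability MDP this can fail (a policy that loops forever has value $0$ yet may satisfy the equation), but since $x^t$ is fully mixed every pure policy of Lucifer induces an absorbing chain, and for proper finite MDPs the one-step criterion is indeed sufficient. You allude to this (``absorbing Markov decision process'') but should justify it. With that in place, your strict inequality for every $j>1$ makes the all-ones reply the \emph{unique} minimizer at each position, so the algorithm is forced to select it, closing the induction.
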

\begin{proof}
For $t=1$, we see that for all strategies
Lucifer can select Dante guess correctly with probability $\frac{1}{m}$
as $x^1$ is the uniform choice in each position.
If Lucifer plays 1, Dante will lose the entire game immediately 
with probability $\frac{m-1}{m}$ at each position and advance one
step with probability $\frac{1}{m}$. Any other choice of Lucifer
will preserve the advancement probability but decrease the probability
that Dante loses the game immediately (replacing the probability mass
with a probability of going to the initial position). 
We conclude that choosing 1 is Lucifer's
best reply.

So we only need to look at $t>1$.
We will do the proof using contradiction.

Let $t$ be the lowest value, such that there exists $N,m$ 
so that when applying strategy iteration to $P(N,m)$, 
the reply $y^t$ does not choose 1 in every position.
Also, let $N$ be the lowest such $N$ and let 
$i$ be the smallest $i$ so that $y^t$ does not pick 1 in 
position $i$.

That is, for any position $k<i$, $y^t$ chooses action $1$ in position
$k$, so to determine the best reply $y^t$, we just need to determine
its action in position $i$.  By Lemma \ref{lem:Kristoffer argument},
if we restrict $x^t$ to positions $1,\ldots,i$ and consider the game
$P(i,m)$, the reply $y^t$, restricted to $P(i,m)$, is also a best reply to $x^t$ in this
game. We shall in fact prove that in this game, Lucifer's reply is not
best, unless it chooses $1$, also in position $i$.  This will yield
the desired contradiction. We shall look at each of Lucifer's possible
actions in position $i$.

If Lucifer chooses 1, and play starts in position $i$,
Dante wins $P(i,m)$ if he chooses 1. This Dante
has a probability of $x_{i,1}^{t}$ of doing.

On the other hand, if Lucifer chooses $p>1$, at position $i$, Dante
will go back to state 1 if he chooses $1,\dots,p-1$ and win
immediately if he chooses $p$.

So each time Dante chooses $1,\dots,p-1$, which he does with probability
$\sum_{j=1}^{p-1}x_{i,j}^{t}$ he has to get back to position $i$ from position $1$. Since Lucifer uses strategy $y^t$, Dante needs to chooses $1$ in all positions from $1$ to $i-1$, which he has a probability of $\sum_{j=1}^{p-1}x_{i,j}^{t}$ of doing. Each time
he is at position $i'$ he has a probability of $x_{i,p}^{t}$
to win.

His probability for winning is therefore
\begin{equation}
x_{i,p}^{t}\sum_{l=0}^{\infty}\left(\left(\sum_{j=1}^{p-1}x_{i,j}^{t}\right)\left(\prod_{j=1}^{i-1}x_{j,1}^{t}\right)\right)^{l}  =  \frac{x_{i,p}^{t}}{1-\left(\sum_{j=1}^{p-1}x_{i,j}^{t}\right)\left(\prod_{j=1}^{i-1}x_{j,1}^{t}\right)}
\end{equation}
which, by Lemma \ref{useful3} is equal to
\[ \frac{x_{i,1}^{t}\left(1-\frac{v_{1}^{t-1}}{v_{i+1}^{t-1}}\right)^{p-1}}{1-\left(\sum_{j=1}^{p-1}x_{i,j}^{t}\right)\left(\prod_{j=1}^{i-1}x_{j,1}^{t}\right)}
\]
which, by Lemma \ref{new-lemma} is equal to
\[
\frac{x_{i,1}^{t}\left(1-\prod_{j=1}^{i}x_{j,1}^{t-1}\right)^{p-1}}{1-\left(\sum_{j=1}^{p-1}x_{i,j}^{t}\right)\left(\prod_{j=1}^{i-1}x_{j,1}^{t}\right)}.\]

We will show using induction in $p>1$, that Lucifer is better off
if he always chooses 1, than if he always chooses $p$. That is: 
\begin{equation}\label{whattoprove}
\forall p>1:x_{i,1}^{t}<\frac{x_{i,1}^{t}\left(1-\prod_{j=1}^{i}x_{j,1}^{t-1}\right)^{p-1}}{1-\left(\sum_{j=1}^{p-1}x_{i,j}^{t}\right)\left(\prod_{j=1}^{i-1}x_{j,1}^{t}\right)}.
\end{equation}

For $p=2$, we may argue as follows.
By Lemma \ref{lem:Increasing probabillities} we have that $\prod_{j=1}^{i}x_{j,1}^{t-1}<\prod_{j=1}^{i}x_{j,1}^{t}$. Since $x_{i,1}^t > 0$, this implies
\begin{equation}\label{BC}
x_{i,1}^{t}  <  \frac{x_{i,1}^{t}\left(1-\prod_{j=1}^{i}x_{j,1}^{t-1}\right)}{1-\prod_{j=1}^{i}x_{j,1}^{t}}\end{equation}
which is the statement we wanted to prove.

For $p>2$, we argue as follows. The right hand side of (\ref{whattoprove})
is
\[
\frac{x_{i,1}^{t}\left(1-\prod_{j=1}^{i}x_{j,1}^{t-1}\right)^{p-1}}{1-\left(\sum_{j=1}^{p-1}x_{i,j}^{t}\right)\left(\prod_{j=1}^{i-1}x_{j,1}^{t}\right)}.\]
Applying Lemma \ref{useful3}, this may be rewritten as
 \begin{align}
& \frac{x_{i,1}^{t}\left(1-\prod_{j=1}^{i}x_{j,1}^{t-1}\right)^{p-1}}{1-\left(\sum_{j=1}^{p-1}\left(x_{i,1}^{t}\left(1-\prod_{j=1}^{i}x_{j,1}^{t-1}\right)^{j-1}\right)\right)\left(\prod_{j=1}^{i-1}x_{j,1}^{t}\right)} \nonumber \\
 = &\frac{x_{i,1}^{t}\left(1-\prod_{j=1}^{i}x_{j,1}^{t-1}\right)^{p-1}}{1-\left(\sum_{j=1}^{p-1}\left(1-\prod_{j=1}^{i}x_{j,1}^{t-1}\right)^{j-1}\right)\left(\prod_{j=1}^{i}x_{j,1}^{t}\right)} \nonumber \\
 = & \frac{x_{i,1}^{t}\left(1-\prod_{j=1}^{i}x_{j,1}^{t-1}\right)^{p-1}}{1-\left(\sum_{j=1}^{p-2}\left(1-\prod_{j=1}^{i}x_{j,1}^{t-1}\right)^{j-1}\right)\left(\prod_{j=1}^{i}x_{j,1}^{t}\right)-\left(1-\prod_{j=1}^{i}x_{j,1}^{t-1}\right)^{p-2}\left(\prod_{j=1}^{i}x_{j,1}^{t}\right)} \nonumber \\
 = & \frac{x_{i,1}^{t}\left(1-\prod_{j=1}^{i}x_{j,1}^{t-1}\right)}{\frac{1-\left(\sum_{j=1}^{p-2}\left(1-\prod_{j=1}^{i}x_{j,1}^{t-1}\right)^{j-1}\right)\left(\prod_{j=1}^{i}x_{j,1}^{t}\right)}{\left(1-\prod_{j=1}^{i}x_{j,1}^{t-1}\right)^{p-2}}-\prod_{j=1}^{i}x_{j,1}^{t}}\label{last}
\end{align}
To bound (\ref{last}), we use the induction hypothesis:
\begin{equation*}
x_{i,1}^{t}  <  \frac{x_{i,1}^{t}\left(1-\prod_{j=1}^{i}x_{j,1}^{t-1}\right)^{p-2}}{1-\left(\sum_{j=1}^{p-2}x_{i,j}^{t}\right)\left(\prod_{j=1}^{i-1}x_{j,1}^{t}\right)}
\end{equation*}
We note that the induction hypothesis implies
\begin{equation*}
1  >  \frac{1-\left(\sum_{j=1}^{p-2}x_{i,j}^{t}\right)\left(\prod_{j=1}^{i-1}x_{j,1}^{t}\right)}{\left(1-\prod_{j=1}^{i}x_{j,1}^{t-1}\right)^{p-2}}
\end{equation*}
and conclude that the expression (\ref{last}) is at least: 
\[ \frac{x_{i,1}^{t}\left(1-\prod_{j=1}^{i}x_{j,1}^{t-1}\right)}{1-\prod_{j=1}^{i}x_{j,1}^{t}} \]
which, by equation (\ref{BC}) is strictly greater than $x_{i,1}^t$, as desired.
\end{proof}


\begin{lemma}
\label{onemany}
Let $x^t$ be the behavior strategies computed when applying strategy
iteration to $P(N,m)$.
Let $\hat x^t$ be the behavior strategies computed when applying strategy
iteration to $P(1,m)$.
Then, for all $t$, $x_1^t = \hat x_1^t$.
\end{lemma}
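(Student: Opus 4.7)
The plan is to proceed by induction on $t$, using Lemma \ref{lem:matrix(x) corresponds to the matrix in Hoffman-Karp} to reduce the claim to the equality of a single scalar parameter in each game. That lemma, combined with Lemma \ref{lem-alwaysone} (which ensures that the hypothesis of the former is satisfied, in both $P(N,m)$ and the specialization $P(1,m)$, since Lucifer's best reply always plays $1$ everywhere), identifies $x_1^t = p^z$ for $z = v_1^{t-1}/v_2^{t-1}$ in $P(N,m)$, and $\hat{x}_1^t = p^{\hat{z}}$ for $\hat{z} = \hat{v}_1^{t-1}/\hat{v}_2^{t-1} = \hat{v}_1^{t-1}$ in $P(1,m)$ (using that position $2$ of $P(1,m)$ is GOAL, so $\hat{v}_2^{t-1} = 1$). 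Since $p^{(\cdot)}$ is a well-defined function of its scalar argument, it therefore suffices to prove $z = \hat{z}$.

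For the base case $t = 1$, strategy iteration initializes both $x_1^1$ and $\hat{x}_1^1$ to the uniform distribution on $\{1,\ldots,m\}$.

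For the inductive step, apply Lemma \ref{new-lemma} to $P(N,m)$ at $i = 1$ to express $z = v_1^{t-1}/v_2^{t-1}$ purely as a function of $x_{1,1}$ at an earlier iteration. On the $P(1,m)$ side, an analogous expression for $\hat{z}$ follows from a one-line direct calculation: with Lucifer playing $1$ at position $1$, Dante wins iff he too plays $1$ (any $k > 1$ overshoots and loses immediately), hence $\hat{v}_1^{t-1} = \hat{x}_{1,1}^{t-1}$. Thus both $z$ and $\hat{z}$ are equal to the same function of $x_{1,1}$ (resp.\ $\hat{x}_{1,1}$) at an iteration strictly less than $t$. The induction hypothesis $x_1^{s} = \hat{x}_1^{s}$ for $s < t$ yields $x_{1,1}^{s} = \hat{x}_{1,1}^{s}$ at the relevant earlier $s$, so $z = \hat{z}$, and therefore $x_1^t = p^z = p^{\hat{z}} = \hat{x}_1^t$.

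The only real obstacle is bookkeeping: one must verify that the hypotheses of the earlier lemmas (notably Lemma \ref{lem-alwaysone} and Lemma \ref{lem:matrix(x) corresponds to the matrix in Hoffman-Karp}) apply uniformly to both $P(N,m)$ and $P(1,m)$ at every iteration $t$, and that the iteration indices in Lemma \ref{new-lemma} match up correctly between the two games. Since the earlier lemmas are proved for arbitrary $N$, the specialization to $N=1$ is immediate and no additional case analysis is required.
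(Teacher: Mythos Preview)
Your proof is correct and follows essentially the same approach as the paper's own proof: induction on $t$, invoking Lemma~\ref{lem-alwaysone} and Lemma~\ref{lem:matrix(x) corresponds to the matrix in Hoffman-Karp} to write $x_1^t = p^z$ and $\hat x_1^t = p^{\hat z}$, then using Lemma~\ref{new-lemma} (on the $P(N,m)$ side) and a direct computation (on the $P(1,m)$ side) to show $z = x_{1,1}^{t-1} = \hat x_{1,1}^{t-1} = \hat z$ via the induction hypothesis. The paper's proof is slightly terser (it just says ``similarly'' for the $P(1,m)$ case rather than spelling out the one-line computation), but the structure and the lemmas used are identical.
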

\begin{proof}
We show this by induction in $t$. 
For $t=1$, both $x_1^1 = \frac{1}{m}$ and $\hat x_1^t=\frac{1}{m}$.
For $t>1$, Lemma \ref{lem-alwaysone} states that $y^t$ chooses 1
in every position.
By Lemma \ref{lem:matrix(x) corresponds to the matrix in Hoffman-Karp},
we have that
$x_1^t = p^z$, where $z = \frac{v_{1}^{t-1}}{v_{2}^{t-1}} = x_1^{t-1}$,
where the last equation is by Lemma \ref{new-lemma}.

On the other hand, applying strategy iteration to $P(1,m)$, yielding
strategies $\hat x^t_1$,
we similarly get $\hat x_1^t = p^z$, where $z = \hat x_1^{t-1}$. Since
$x_1^{t-1} = \hat x_1^{t-1}$ by induction, we are done.
\end{proof}

\begin{lemma}\label{lem:SV} Applying strategy iteration to $P(1,m)$ yields
valuations $v^t = \tilde v^t$, i.e. strategy iteration computes the same valuations as value iteration.
\end{lemma}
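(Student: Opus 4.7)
The plan is to show by induction on $t$ that $v^t_1 = \tilde v^t_1$ for the single non-terminal position $1$ of $P(1,m)$ (the terminal coordinates $v^t_0 = 0$ and $v^t_{N+1} = 1$ are fixed throughout, so this suffices). The key observation is that in $P(1,m)$ the matrix $A_1(\tilde v^{t-1})$ has exactly the structure of the matrix $\m{z}$ defined in Lemma \ref{fact-useful1}: if Dante's guess equals Lucifer's hidden number the pebble moves to GOAL (value $1$), if Dante overshoots it goes to TRAP (value $0$), and if Dante undershoots it returns to position $1$ (value $\tilde v^{t-1}_1$). Hence value iteration satisfies the one-variable recurrence $\tilde v^t_1 = \mathrm{val}(\m{\tilde v^{t-1}_1})$, starting from $\tilde v^0_1 = 0$.

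I will then derive the same recurrence for strategy iteration. By Lemma \ref{lem-alwaysone}, at every step $t \geq 1$ the best reply $y^t$ chosen by strategy iteration picks action $1$ in the unique non-terminal position. Consequently, when $x^t$ is played against $y^t$, Dante wins if and only if he also plays action $1$ in that position, so $v^t_1 = x^t_{1,1}$. Applying Lemma \ref{lem:matrix(x) corresponds to the matrix in Hoffman-Karp} (whose hypothesis "Lucifer plays $1$ everywhere for all earlier rounds" is exactly what Lemma \ref{lem-alwaysone} delivers here), the strategy $x^t_1$ equals $p^z$ with $z = v^{t-1}_1/v^{t-1}_2 = v^{t-1}_1$, because position $2$ of $P(1,m)$ is GOAL and hence $v^{t-1}_2 = 1$. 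By Lemma \ref{fact-useful1} the first coordinate of $p^z$ equals $\mathrm{val}(\m{z})$, giving $v^t_1 = \mathrm{val}(\m{v^{t-1}_1})$.

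It remains to check the base case. For value iteration, $\tilde v^0_1 = 0$ and $\tilde v^1_1 = \mathrm{val}(\m{0}) = 1/m$. For strategy iteration, $x^1$ is uniform and $y^1$ chooses $1$ everywhere, so $v^1_1 = x^1_{1,1} = 1/m$ as well. A straightforward induction now yields $v^t_1 = \tilde v^t_1$ for all $t \geq 1$, which is the claim.

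The steps themselves are routine once the structural identification of $A_1(\cdot)$ with $\m{\cdot}$ is made. The only genuinely nontrivial ingredient is the appeal to Lemma \ref{lem-alwaysone}, which has already been established; without it one could not collapse the computation of $v^t_1$ to the simple expression $x^t_{1,1}$, and the recurrence for strategy iteration would not obviously match the one for value iteration. So the main "content" of this lemma really lies in the earlier work identifying Lucifer's best reply; here we just harvest its consequences.
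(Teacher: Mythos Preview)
Your proof is correct and follows essentially the same approach as the paper's own proof: both establish by induction that value iteration and strategy iteration satisfy the identical one-step recurrence $w^t_1 = \mathrm{val}(\m{w^{t-1}_1})$, using Lemma~\ref{lem-alwaysone} to reduce $v^t_1$ to $x^t_{1,1}$ and then Lemma~\ref{lem:matrix(x) corresponds to the matrix in Hoffman-Karp} together with Lemma~\ref{fact-useful1} to identify $x^t_{1,1}$ with $\mathrm{val}(\m{v^{t-1}_1})$. Your write-up is in fact somewhat cleaner in spelling out why $v^{t-1}_2 = 1$ (position $2$ being GOAL) and in handling the base case explicitly.
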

\begin{proof}
We show this by induction in $t$. By Lemma \ref{lem-alwaysone}, $y^{t-1}$ is the
strategy that chooses $1$. Thus, Dante wins if and only if he chooses
1 in the first round and we have $v_1^{t-1} = x_{1,1}^{t-1}$.
On the other hand, by Lemma 
\ref{lem:matrix(x) corresponds to the matrix in Hoffman-Karp},
we have that $x_{1,1}^t = p^z$ where $z = x_{1,1}^{t-1}$.
Thus, $v^t = p^z$ where $z = v_{1}^{t-1}$. Inspecting the
value iteration algorithm we find that we also have
that $\tilde v^t = p^z$ where $z = \tilde v_{1}^{t-1}$, and
since we can see by inspection that we also have $v^1 = \tilde v^1$, we
are done.
\end{proof}
Note that Lemma \ref{lem:SV} together with Corollary 
\ref{cor-valitonepos} yields our previously stated claim that
strategy iteration may need exponential time to achieve non-trivial
approximations for a one-position game.

Finally, the proof of Lemma \ref{lem-patsi}, 
(stating that when applying strategy iteration to $P(N,m)$, the patience 
of the strategy $x^t$ computed in iteration $t$ is 
at most $emt$)
\noindent
\begin{proof}[Proof of Lemma \ref{lem-patsi}]
By Lemma \ref{lem:AIP}, Lemma \ref{fact-useful1} and
Lemma \ref{fact-useful2}, 
we have that the smallest behavior probability in $x^t$ is
$x^t_{1,m}$, i.e., the probability of playing $m$ in the start
position where Dante still has to guess correctly $N$ times
to win.

Then, by Lemma \ref{onemany}, to estimate this probability, we can
consider $P(1,m)$ instead of $P(N,m)$. In fact we shall consider
the valuations $v^t$ computed when applying strategy iteration
to $P(1,m)$. By Lemma \ref{lem:SV} the values computed are the
same as those $\tilde v^t$ computed by value iteration on $P(1,m)$. 
So, by Theorem \ref{thm-valitonepos} and
Lemma \ref{lem-valit} we have that
$v^t \leq 1-(1-\frac{1}{m})(\frac{1}{mT})^{1/(m-1)}$.
That is, $1-v^t \geq (1-\frac{1}{m})(\frac{1}{mt})^{1/(m-1)}$.
Now, Lemma \ref{fact-useful1} tells us that 
$x^t_{1,m} \geq ((1-\frac{1}{m})(\frac{1}{mt})^{1/(m-1)})^{m-1} =
(1-\frac{1}{m})^{m-1} (\frac{1}{mt}) \geq \frac{1}{emt}$ and
we are done.
\end{proof}

\section*{Acknowledgements}
First and foremost, we would like to thank Uri Zwick for extremely
helpful discussions and Kousha Etessami for being instrumental
for starting this research. We would also like to thank Vladimir
V. Podolskii for helpful discussions. A preliminary version of this paper \cite{HIMCSR} appeared in the proceeings of CSR'11.

%

\bibliographystyle{plain}      


%
%

\end{document}